\declaretheorem{theorem}
\declaretheorem[sibling=theorem]{lemma}
\declaretheorem{example}
\declaretheorem[sibling=theorem]{corollary}
\newcommand{\pointset}{\Omega}
\newcommand{\config}{X}
\newcommand{\R}{\mathbb{R}}
\newcommand{\ALG}{\text{\rm ALG}}
\newcommand{\OPT}{\text{\rm OPT}}
\DeclareMathOperator*{\argmin}{arg\,min}
\DeclareMathOperator*{\argmax}{arg\,max}
\title{
    Deterministic $3$-Server on a Circle and the Limitation of Canonical Potentials
}
\author{
    Zhiyi Huang
    \thanks{The University of Hong Kong. Email: zhiyi@cs.hku.hk.}
    \and
    Hanwen Zhang
    \thanks{IIIS, Tsinghua University. Email: zhanghw18@mails.tsinghua.edu.cn.}
}
\date{May 2022}
\begin{document}

\begin{titlepage}
    \thispagestyle{empty}
    \maketitle
    \begin{abstract}
        The deterministic $k$-server conjecture states that there is a $k$-competitive deterministic algorithm for the $k$-server problem for any metric space.
We show that the work function algorithm is $3$-competitive for the $3$-server problem on circle metrics, a case left open by Coester and Koutsoupias (2021).
Our analysis follows the existing framework but introduces a new potential function which may be viewed as a relaxation of the counterpart by Coester and Koutsoupias (2021).
We further notice that the new potential function and many existing ones can be rewritten in a canonical form.
Through a computer-aided verification, however, we find that no such canonical potential function can resolve the deterministic $3$-server conjecture for general metric spaces under the current analysis framework.
    \end{abstract}
\end{titlepage}

\section{Introduction}
\label{sec:introduction}

We study the $k$-server problem introduced by \citet*{ManasseMS:STOC:1998}.
Consider a metric space and $k$ servers located in the space.
In each round a point in the metric makes a request.
If currently no server is at that point then the online algorithm needs to move a server there in this round.
We want to minimize the total distance that the servers travel to serve the requests.
Following the standard competitive analysis, we compare the algorithm's total distance to the minimum total distance in hindsight.
The maximum of this ratio over all possible instances is the \emph{competitive ratio} of the algorithm.

The $k$-server problem is one of the oldest and hardest online algorithm problems, capturing online caching and paging as special cases.
It offers two long-standing open questions that have been an endless source of inspiration to online algorithm researchers:

\begin{quote}
    \textbf{Deterministic $k$-server conjecture:}\\
    \emph{For any metric space there is a deterministic $k$-competitive algorithm.}
\end{quote}

\begin{quote}
    \textbf{Randomized $k$-server conjecture:}\\
    \emph{For any metric space there is a randomized $O(\log k)$-competitive algorithm.}
\end{quote}

This paper focuses on deterministic algorithms.
On the one hand, \citet{ManasseMS:STOC:1998} showed that any deterministic algorithm's competitive ratio is at least $k$.
On the other hand, \citet{KoutsoupiasP:JACM:1995} proposed the \emph{work function algorithm} (WFA), and proved that the WFA is $(2k-1)$-competitive for any metric space.
Although it is widely conjectured that the WFA is in fact $k$-competitive, the $2k-1$ ratio remains the best known bound for general metric spaces after more than two decades of subsequent efforts.

Since a direct attack to the deterministic $k$-server conjecture on general metrics appears beyond reach, a lot of efforts have been devoted to special metric spaces.
For example, \citet{ManasseMS:STOC:1998} resolved the case of $k=2$ servers and the case when the metric space has only $k+1$ points;
\citet*{ChrobakKPV:SIDMA:1991} settled the case of line metrics;
\citet{ChrobakL:SICOMP:1991} proved the case of tree metrics;
\citet*{BeinCL:TCS:2002} solved the case of $k=3$ on a Manhattan plane;
and \citet{BartalK:TCS:2004} worked out the case when the metric has only $k+2$ points and the case of weighted star metrics (a.k.a., the weighted paging problem).

The most recent progress was due to \citet{CoesterK:ICALP:2021}, who gave a unified potential function covering all known metric spaces for which the WFA is known to be $k$-competitive.
They highlighted two cases as the frontier of the deterministic $k$-server conjecture: $k=3$ (because $k=2$ is known), and circle metrics (because they are the simplest non-tree metrics).
They further used the combination of the two, i.e., $3$ servers on circle metrics, to refute a lazy-adversary conjecture by \citet{ChrobakL:SICOMP:1991}.
They pointed out that their unified potential fails in the case of $3$ servers on circle metrics, and questioned whether the WFA is $3$-competitive in this case and even the correctness of the deterministic $k$-server conjecture.%
\footnote{Coester and Koutsoupias questioned the correctness of the deterministic $3$-server conjecture in their paper, and Coester asked whether the WFA is $3$-competitive for $3$ servers on circle metrics in the conference talk.}

\subsection{Our Contributions}

Our contributions are twofold.
We introduce a new potential function and use it to prove that the WFA is $3$-competitive for the $3$-server problem on circle metrics, answering affirmatively the question left by \citet{CoesterK:ICALP:2021}.
The new potential may be viewed as a relaxation of theirs.
Our analysis still follows the existing framework, building on the known properties of the work functions, including quasi-convexity, duality, and the extended cost lemma.
To this end, our result reconsolidates the belief that the WFA is $k$-competitive in general, optimistically even provable under the existing framework.
See Section~\ref{sec:3-server}.

On the flip side, we define a canonical family of potential functions that captures as special cases the new potential function in this paper, and the previous potential functions of 
\citet{chrobak1991server}, \citet{BeinCL:TCS:2002}, and \citet{CoesterK:ICALP:2021}. 
We further provide evidence that this canonical family of potential functions cannot resolve the $k=3$ case on general metrics.
Hence, fundamentally new ideas are still needed for further advancement on the deterministic $k$-server conjecture.
See Section~\ref{sec:potential}.


\subsection{Further Related Works}

There has been lots of progresses on the randomized $k$-server conjecture.
\citet*{BansalBN:JACM:2012} solved the case of weighted star metrics (a.k.a., the weighted paging problem).
For general metric spaces, a breakthrough by \citet*{BansalBMN:FOCS:2011} gave a $\mbox{polylog}(k,n)$-competitive algorithm, where $n$ is the number of points in the metric space.
Subsequently, \citet{BubeckCLLM:STOC:2018} gave an $O(\log^2 k)$-competitive algorithm for hierarchically separated trees (HST), which implies an improved $O(\log^2 k \log n)$-competitive algorithm for general metrics via Bartal's embedding (c.f., Bartal~\cite{Bartal:FOCS:1996, Bartal:STOC:1998} and \citet*{FakcharoenpholRT:JCSS:2004}), or an $O(\log^3 k \log \Delta)$-competitive algorithm through their own embedding, where $\Delta$ is the ratio between the maximum and minimum distances between pairs of points.
Their algorithm and analysis were later simplified by \citet*{BuchbinderGMN:SODA:2019}.

The WFA has found applications in other online algorithm problems, notably in several generalziations of the $k$-server problem.
\citet*{BorodinLS:JACM:1992} proved that the WFA achieves the optimal competitive ratio in metrical task systems.
\citet{Sitters:SICOMP:2014} showed that the WFA is constant competitive for the generalized $2$-server problem.
\citet*{BansalEK:FOCS:2017} used the WFA to obtain a near optimal competitive ratio in the weighted $k$-server problem.
Most recently, \citet*{ArgueGTG:JACM:2021} and \citet{Sellke:SODA:2020} applied the concept of work function to chasing convex bodies and got near optimal competitive ratios.




\section{Preliminaries}
\label{sec:preliminary}

Consider a metric space $(\pointset, d)$, where $\pointset$ is a set of points and $d$ is a metric on $\pointset$.
In other words, $d(x, y)$ is the distance between two points $x, y \in \pointset$. 
Let $\pointset^k$ denote the set of multi-sets of $k$ points from $\pointset$.
We will abuse notation and let $d(X, Y)$ also denote the Wasserstein distance between two finite multi-sets $X, Y \in \pointset^k$.
We further let $x^k$ denote the multi-set of $k$ copies of $x \in \pointset$, and let $x_1 x_2 \dots x_k$ denote the multi-set of $k$ points $x_1, x_2, \dots, x_k \in \pointset$. 

An instance of the $k$-server problem is defined by an initial configuration $\config_0 \in \pointset^k$, and a sequence of requests $r_1, r_2, \dots, r_T$.
The initial configuration is given to the algorithm at the beginning.
Then, the algorithm receives the requests one at a time.
Upon receiving each request $r_t$, the algorithm immediately chooses a configuration $\config_t \in \pointset^k$ with $r_t \in \config_t$.
The objective is to minimize the total distance between neighboring configurations, i.e.:
\[
    \sum_{t=1}^T d \big( \config_{t-1}, \config_t \big)
    ~.
\]

Let $\ALG$ denote the objective given by the algorithm's configurations.
Following the standard worst-case competitive analysis, we compare to the minimum cost in hindsight, i.e.:
\[
    \OPT ~=~ \min_{(\config_1, \dots, \config_T) ~:~ \forall 1 \le t \le T, r_t \in \config_t} ~ \sum_{t=1}^T d \big( \config_{t-1}, \config_t \big)
    ~.
\]

The competitive ratio of an online algorithm with respect to (w.r.t.) a given metric space is the smallest $\Gamma$ for which $\ALG \le \Gamma \cdot \OPT + O(1)$ for all instances of the $k$-server problem on this metric, where $O(1)$ represents a term that does not depend on the number of requests $T$ (but may depend on the number of servers $k$ and the diameter of the metric).

It will be convenient to further consider the minimum cost up to any time $1 \le t \le T$, and \emph{with a given final configuration}.
For any time $1 \le t \le T$, and any configuration $\config_t \in \pointset^k$ with $r_t \in \config_t$, define:
\[
    \OPT_t \big( \config_t \big) ~=~ \min_{(\config_1, \dots, \config_{t-1}) ~:~ \forall 1 \le i \le t-1, r_i \in \config_i} ~ \sum_{i=1}^t d\big(\config_{i-1}, \config_i\big)
    ~.
\]

\subsection{Work Functions and the Work Function Algorithm}

The work function $w_t : \pointset^k \mapsto \R $ at any time $0 \le t \le T$ is the optimal objective for processing the first $t$ requests, and \emph{further moving to a given final configuration}, i.e.:
\[
    w_t\big(\config\big) ~=~ \min_{(\config_1, \dots, \config_t) ~:~ \forall 1 \le i \le t, r_i \in \config_i} ~ \sum_{i=1}^t d \big( \config_{i-1}, \config_i \big) + d \big( \config_t, \config \big)
    ~.
\]

For $1 \le t \le T$, this is equivalent to:
\[
    w_t\big(\config\big) ~=~ \min_{\config_t \in \pointset^k ~:~ r_t \in \config_t} \OPT_t\big(\config_t\big) + d\big(\config_t, \config\big)
    ~.
\]

The work function algorithm (WFA) selects at each time $1 \le t \le T$ the minimizer in the above definition of the work function at time $t$ when the input is $\config_{t-1}$, i.e.:
\[
    \config_t \in \argmin_{\config \in \pointset^k ~:~ r_t \in \config} \OPT_t\big(\config\big) + d\big(\config, \config_{t-1}\big)
    ~.
\]

\begin{lemma}[c.f., \citet{koutsoupias2009k}]
    \label{lem:work-function}
    The work functions satisfy the following properties.
    
    \begin{enumerate}
        \item (Monotonicity)
            For any $1 \le t \le T$, and any $X \in \pointset^k$:
            \[
                w_t(X) \ge w_{t-1}(X)
                ~,
            \]
            and it holds with equality when $r_t \in X$.
        \item (Lipschitzness)
            For any $0 \le t \le T$, and any $X, Y \in \pointset^k$:
            \[
                w_t\big(X\big) - w_t\big(Y\big) \le d\big(X, Y\big)
                ~.
            \]

        \item (Quasi-convexity)
            For any $0 \le t \le T$, and any multi-sets $X, Y \in \pointset^k$, there exists a bijection $\pi : X \to Y$,%
            \footnote{The bijection treats copies of the same point as distinct objects, and may map them differently in general.}
            such that for any partition $X = X_1 \sqcup X_2$:
            \[
                w_t \big(X\big) + w_t \big(Y\big) \ge w_t \big(X_1 \cup \pi(X_2) \big) + w_t \big(\pi(X_1) \cup X_2 \big)
                ~.
            \]
            Further, the bijection $\pi$ maps the common points in $X, Y$ to themselves.%
            \footnote{That is, if an point $x$ appears $n_X$ and $n_Y$ times in $X$ and $Y$ respectively, $\pi$ maps $\min \{ n_X, n_Y \}$ copies of $x$ to $x$.}
        \item (Duality)
            \footnote{The original duality also asserts $\argmin_{X \in \pointset^k} \big( w_{t-1}(X) - d(X, r_t^k) \big) \subseteq \argmin_{X \in \pointset^k} \big( w_t(X) - d(X, r_t^k) \big)$. It is not needed in this paper because we focus on metrics with antipodes including the circle metric and general metrics (through appropriate extension), and because of Lemma~\ref{lem:antipode} in the next section.}
            For any $1 \le t \le T$:
            %
            %
            %
            \[
                \argmin_{X \in \pointset^k} \big( w_{t-1}(X) - d(X, r_t^k) \big) \subseteq \argmax_{X \in \pointset^k} \big( w_t(X) - w_{t-1}(X) \big)
                ~.
            \]
        \item (Extended Cost Lemma)
            Suppose that for any $k$-server instance:
            \[
                \sum_{t=1}^T \max_{X \in \pointset^k} \big( w_t(X) - w_{t-1}(X) \big) \le (\Gamma+1)\,\OPT + O(1)
                ~.
            \]
            Then, the WFA is $\Gamma$-competitive.
    \end{enumerate}
\end{lemma}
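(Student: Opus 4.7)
The plan is to derive all five items from the recurrence $w_t(X) = \min_{Z \in \pointset^k,\, r_t \in Z} \big(w_{t-1}(Z) + d(Z,X)\big)$, which comes from pulling the final configuration out of the definition of $w_t$, together with triangle-inequality manipulations on trajectories. I would start with Lipschitzness: replacing the last move $X_t \to Y$ in an optimal trajectory witnessing $w_t(Y)$ by $X_t \to X$ increases the cost by at most $d(X, Y)$. Monotonicity then follows because every feasible $Z$ in the recurrence satisfies $w_{t-1}(Z) + d(Z, X) \ge w_{t-1}(X)$ by Lipschitzness; when $r_t \in X$, picking $Z = X$ saturates the inequality, yielding the equality case.

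The heart of the lemma is quasi-convexity, which I would prove by induction on $t$. The base case $w_0(\cdot) = d(X_0, \cdot)$ reduces to a purely Wasserstein statement: for any $X, Y \in \pointset^k$ there is a bijection $\pi : X \to Y$ fixing common elements such that $d(X_0, X) + d(X_0, Y) \ge d(X_0, X_1 \cup \pi(X_2)) + d(X_0, \pi(X_1) \cup X_2)$ for every partition $X = X_1 \sqcup X_2$. I would construct $\pi$ by chaining an optimal transport plan $X_0 \to X$ with one $X_0 \to Y$ through their common source, reducing the partition inequality to a cycle-swap observation on the union of the two transports. For the inductive step, pick minimizers $Z_X, Z_Y$ of the recurrence realizing $w_t(X), w_t(Y)$, apply the inductive hypothesis to $w_{t-1}(Z_X), w_{t-1}(Z_Y)$, and lift the resulting bijection on $Z_X \cup Z_Y$ through the final Wasserstein moves $Z_X \to X$ and $Z_Y \to Y$. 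The main obstacle will be synthesizing a single $\pi$ that witnesses the inequality for every partition simultaneously while respecting common elements; the inner and outer bijections need not compose cleanly, so I expect a careful matching-exchange argument routed through the cycle structure of the combined transport plans to be required.

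The last two items follow from the earlier ones. For duality, let $X^* \in \argmin_{X} \big( w_{t-1}(X) - d(X, r_t^k) \big)$. I would exhibit $X^*$ as an argmax of $w_t(X) - w_{t-1}(X)$ by evaluating the recurrence at an intermediate $Z$ that interpolates between $X^*$ and $r_t^k$, combining the minimality of $X^*$ with Wasserstein triangle inequalities; the antipode structure the paper invokes for the metric should ensure the interpolating $Z$ can be chosen to contain $r_t$. The Extended Cost Lemma is a telescoping/potential argument: the WFA greedy rule gives $\ALG \le \sum_t \big[ w_t(A_t) - w_{t-1}(A_{t-1}) \big] + O(1)$, and decomposing each summand as $\big[ w_t(A_t) - w_{t-1}(A_t) \big] + \big[ w_{t-1}(A_t) - w_{t-1}(A_{t-1}) \big]$, the second piece telescopes against the bounds $w_T(A_T) \ge \OPT$ and $w_0(A_0) = 0$, while the first piece is bounded by the hypothesis $\sum_t \max_X \big( w_t(X) - w_{t-1}(X) \big) \le (\Gamma+1)\,\OPT + O(1)$; the extra $+1$ is precisely what absorbs the $\OPT$ surfaced by the telescoping, yielding $\ALG \le \Gamma \cdot \OPT + O(1)$.
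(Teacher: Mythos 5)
First, note that the paper does not prove this lemma at all --- it is stated with a citation (``c.f.\ Koutsoupias'') as a collection of known properties of work functions, so there is no in-paper proof to match against. Judged on its own merits, your sketch is fine for the easy items (Lipschitzness, and monotonicity via the one-step recurrence), but the two genuinely hard items are not actually established. For quasi-convexity, you correctly identify induction on $t$ and correctly flag the obstacle, but that obstacle \emph{is} the proof: lifting a bijection through two general Wasserstein couplings $Z_X \to X$ and $Z_Y \to Y$ so that a single $\pi$ works for \emph{every} partition and fixes common points is exactly where a naive argument breaks. The standard route (Koutsoupias--Papadimitriou) first reduces the update to single-server moves, $w_t(X) = \min_{a \in X}\big(w_{t-1}(X - a + r_t) + d(a, r_t)\big)$, and runs the induction through that form; your plan as written leaves the core exchange argument unproven. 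Your duality sketch is also off track: the stated inclusion $\argmin_X\big(w_{t-1}(X) - d(X, r_t^k)\big) \subseteq \argmax_X\big(w_t(X) - w_{t-1}(X)\big)$ holds in arbitrary metrics and its standard proof goes through quasi-convexity; the antipode structure you invoke is only relevant to the \emph{other} clause of duality that the paper's footnote explicitly discards.

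The Extended Cost Lemma derivation contains a concrete error. Your starting inequality $\ALG \le \sum_t \big[w_t(A_t) - w_{t-1}(A_{t-1})\big] + O(1)$ cannot be right: its right-hand side telescopes to $w_T(A_T) + O(1) \le \OPT + O(1)$, which would make the WFA $1$-competitive. The correct chain starts from the identity $d(A_{t-1}, A_t) = w_t(A_{t-1}) - w_t(A_t)$ (which itself needs the observation that $w_t(A_t) = \OPT_t(A_t)$ for the WFA's minimizer), sums to
\[
\ALG + w_T(A_T) = \sum_{t=1}^T \big( w_t(A_{t-1}) - w_{t-1}(A_{t-1}) \big) \le \sum_{t=1}^T \max_{X \in \pointset^k} \big( w_t(X) - w_{t-1}(X) \big)
~,
\]
and then uses $w_T(A_T) \ge \OPT$ so that the hypothesis yields $\ALG \le \Gamma\,\OPT + O(1)$. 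So the extended cost must be evaluated at $A_{t-1}$, not $A_t$, and the telescoping happens inside the move-cost identity rather than in the ``second piece'' you describe (which, as written, does not telescope).
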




\subsection{Potential-based Approach}

Most existing analyses of the WFA establish the condition of the Extended Cost Lemma (i.e., the last part of Lemma~\ref{lem:work-function}) by designing appropriate potentials $\Phi_0, \Phi_1, \dots, \Phi_T$ such that:
\begin{align}
    \label{eqn:potential-approach-stepwise}
    \Phi_t - \Phi_{t-1} & ~ \ge ~ \max_{X \in \pointset^k} \big( w_t(X) - w_{t-1}(X) \big)
    ~, && \forall 1 \le t \le T; \\
    \label{eqn:potential-approach-bounded}
    \Phi_T - \Phi_0 ~~ & ~ \le ~ (\Gamma+1)\,\OPT + O(1)
    ~.
\end{align}

Specifically, $\Phi_t$ could be the sum of work function $w_t$'s values on $\Gamma+1$ configurations, plus some other bounded terms.
Then, the boundedness of the potentials implies the second condition above.

\section{Deterministic \texorpdfstring{$3$-Server}{3-Server} on a Circle}
\label{sec:3-server}

This section considers the circle metrics and $k=3$ servers.
Let $\Delta > 0$ denote the diameter of a circle metric.
We may write the set of points as an interval $\pointset = [0, 2\Delta)$.
The corresponding metric is $d(x, y) = \min \{ |x-y|, 2\Delta-|x-y| \}$ for any points $x, y \in \pointset$.

\subparagraph*{Antipodes.}
For any $x \in \pointset$, define its antipode as $\bar{x} = x + \Delta$ if $x \in [0, \Delta)$ and $\bar{x} = x - \Delta$ if $x \in [\Delta, 2\Delta)$.
The antipodes satisfy that for any $x, y \in \pointset$:
\begin{equation}
    \label{eqn:antipode}
    d(x, y) + d(\bar{x}, y) = \Delta
    ~.
\end{equation}

\subparagraph{Arcs.}
For any $x, y \in \pointset$, denote the arc connecting them as $\wideparen{xy}$.
Define $\wideparen{x\bar{x}} = \pointset$.
For any $z \in \wideparen{xy}$, we have:
\[
    d(x, z) + d(z, y) = d(x, y)
    ~.
\]

The next lemma follows from the Lipschitzness of work functions (Lemma~\ref{lem:work-function}).

\begin{lemma}[c.f., \citet{Koutsoupias:FOCS:1999}]
    \label{lem:antipode}
    For any $1 \le t \le T$:
    \[
        \bar{r}_t^k \in \argmin_{X \in \pointset^k} \big( w_{t-1}(X) - d(X, r_t^k) \big)
        ~.
    \]
\end{lemma}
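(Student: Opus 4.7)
The plan is to reduce the claim to a one-line calculation by combining the antipode identity with the Lipschitzness of $w_{t-1}$.

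First I would observe that the Wasserstein distance from any configuration $X = x_1 x_2 \cdots x_k$ to a ``collapsed'' configuration $y^k$ is just $\sum_{i=1}^k d(x_i, y)$, since in the optimal matching every point in $X$ is sent to a copy of $y$. Applying this with $y = r_t$ and $y = \bar{r}_t$ and then summing the antipode identity \eqref{eqn:antipode} termwise, I get the key equation
\[
    d\big(X, r_t^k\big) + d\big(X, \bar{r}_t^k\big) ~=~ k\Delta
    ~,
\]
for every $X \in \pointset^k$. In particular, taking $X = \bar{r}_t^k$ (so that $d(X, \bar{r}_t^k) = 0$) gives $d(\bar{r}_t^k, r_t^k) = k\Delta$.

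Next I would invoke Lipschitzness of the work function (Lemma~\ref{lem:work-function}, part 2) to obtain, for any $X \in \pointset^k$,
\[
    w_{t-1}\big(\bar{r}_t^k\big) - w_{t-1}\big(X\big) ~\le~ d\big(\bar{r}_t^k, X\big) ~=~ k\Delta - d\big(X, r_t^k\big)
    ~,
\]
where the final equality is the identity derived above. Rearranging and subtracting $k\Delta = d(\bar{r}_t^k, r_t^k)$ from both sides yields
\[
    w_{t-1}\big(\bar{r}_t^k\big) - d\big(\bar{r}_t^k, r_t^k\big) ~\le~ w_{t-1}\big(X\big) - d\big(X, r_t^k\big)
    ~,
\]
which is exactly the statement that $\bar{r}_t^k$ minimizes $w_{t-1}(X) - d(X, r_t^k)$ over $X \in \pointset^k$.

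There is no real obstacle here; the only thing to be careful about is the multi-set version of the Wasserstein distance used to obtain the identity $d(X, r_t^k) + d(X, \bar{r}_t^k) = k\Delta$. This is why the proof genuinely relies on the fact that one endpoint is a collapsed configuration $r_t^k$ (so the matching is forced) and would not work for arbitrary pairs of configurations.
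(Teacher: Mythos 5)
Your proof is correct and is exactly the argument the paper intends: the paper only remarks that the lemma "follows from the Lipschitzness of work functions," and your write-up fills in that one-line hint via the identity $d(X, r_t^k) + d(X, \bar{r}_t^k) = k\Delta$ obtained by summing Eqn.~\eqref{eqn:antipode} over the forced matching to a collapsed configuration. No gaps.
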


\begin{theorem}
    \label{thm:3-server}
    The WFA is $3$-competitive for $3$-server on a circle.
\end{theorem}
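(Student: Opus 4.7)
The plan is to instantiate the potential-based framework laid out in \eqref{eqn:potential-approach-stepwise} and \eqref{eqn:potential-approach-bounded}: construct potentials $\Phi_0,\Phi_1,\dots,\Phi_T$ that dominate the per-step work function increment while staying bounded by $4\,\OPT+O(1)$ in total, and then invoke the Extended Cost Lemma (Lemma~\ref{lem:work-function}) to conclude $3$-competitiveness. Combining Lemma~\ref{lem:antipode} with the duality clause of Lemma~\ref{lem:work-function}, the maximizer on the right of \eqref{eqn:potential-approach-stepwise} always includes $X=\bar{r}_t^3$, so it suffices to ensure
\[
    \Phi_t - \Phi_{t-1} \ge w_t\big(\bar{r}_t^3\big) - w_{t-1}\big(\bar{r}_t^3\big) \qquad \text{for every } 1 \le t \le T.
\]

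I would define $\Phi_t$ in the canonical form previewed in the introduction, namely as the supremum over a family $\mathcal{F}$ of $4$-tuples $(X_1,X_2,X_3,X_4)$ of configurations of
\[
    \sum_{i=1}^{4} w_t(X_i) - \Psi(X_1,X_2,X_3,X_4),
\]
where $\Psi$ is a fixed linear combination of pairwise distances on the circle. The number of work-function terms equals $\Gamma+1=4$, matching the canonical shape advertised in the preliminaries, and the global bound \eqref{eqn:potential-approach-bounded} would follow from the Lipschitz property in Lemma~\ref{lem:work-function}: at any fixed witness, the sum $\sum_i w_t(X_i)$ grows by at most four times the offline movement at step $t$, yielding $\Phi_T-\Phi_0\le 4\,\OPT+O(\Delta)$. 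The design choice is to make $\mathcal{F}$ and $\Psi$ strictly looser than their counterparts in \citet{CoesterK:ICALP:2021}, so that a valid witness is always available no matter where $r_t$ lies on the circle.

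For the stepwise bound I would proceed by case analysis on the mutual positions of $r_t$, $\bar{r}_t$, and the optimal witness $(X_1^\ast,X_2^\ast,X_3^\ast,X_4^\ast)$ attaining $\Phi_{t-1}$. In each case I would apply the quasi-convexity clause of Lemma~\ref{lem:work-function} with $Y=\bar{r}_t^3$ and $X=X_i^\ast$ for a carefully chosen index $i$: the resulting bijection $\pi$ produces a replacement witness in which one coordinate is $\bar{r}_t^3$ while the other three coordinates either already contain $r_t$ (so monotonicity keeps $w_t$ equal to $w_{t-1}$ on them) or are such that their $w_t$-values can be controlled via Lipschitzness. The corresponding change in $\Psi$ is exactly the slack that the relaxation of $\mathcal{F}$ must be engineered to absorb.

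The main obstacle I anticipate is this stepwise verification. On a circle the bijection $\pi$ supplied by quasi-convexity can realize a swap along either of two arcs joining the same pair of points, and only one direction is typically compatible with a given correction term $\Psi$; moreover there is no canonical root or orientation to pin down the case analysis, unlike on line or tree metrics. Designing $\mathcal{F}$ and $\Psi$ simultaneously rich enough to cover every arrangement of $r_t$ relative to the witness and tight enough to keep $\Phi_t$ bounded is where the previous potential of \citet{CoesterK:ICALP:2021} broke down, and is exactly where the relaxation announced in the abstract must do its essential work.
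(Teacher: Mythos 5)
Your proposal reproduces the surrounding framework --- Extended Cost Lemma, the reduction via Lemma~\ref{lem:antipode} and duality to bounding $w_t(\bar r_t^3)-w_{t-1}(\bar r_t^3)$, a potential built from $\Gamma+1=4$ work-function values minus distance corrections --- but all of that is already stated in the paper's preliminaries. The actual content of the theorem is (i) the concrete potential and (ii) the verification of the stepwise inequality, and both are left as announced intentions in your write-up ("the relaxation \dots must do its essential work"). That is a genuine gap, not a deferrable detail: the whole difficulty of the problem lives there. Concretely, the paper takes
\[
    \Phi_t=\min_{u,x,y,z}\Big(w_t(\bar u^3)+w_t(ux\bar y)+w_t(uy\bar z)+w_t(uz\bar x)-d(x,y)-d(y,z)-d(z,x)\Big),
\]
where the three off-diagonal points are forced (after a Lipschitzness argument) to be antipodes of one another's partners and the subtracted triangle of distances encodes that $x,y,z$ do not lie in an open semicircle. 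Without exhibiting such a $\Psi$ and family $\mathcal{F}$, nothing has been proved.

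Two further structural points. First, you define $\Phi_t$ as a \emph{supremum} over witnesses; the paper's potential is a \emph{minimum}, and the direction matters for how the stepwise bound is obtained. With a minimum, one evaluates $\Phi_{t-1}$ at the time-$t$ minimizer $(u_t,x_t,y_t,z_t)$; if one can show the minimizer may be taken with $u_t=r_t$ (Lemma~\ref{lem:3-server}), then monotonicity freezes the three configurations containing $r_t$ and the $w(\bar r_t^3)$ term alone contributes the extended cost via duality. Your plan instead perturbs the optimizer attaining $\Phi_{t-1}$ by quasi-convexity against $\bar r_t^3$, which does not obviously yield the required inequality and is not how the paper proceeds. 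Second, and relatedly, the paper isolates the hard step as a statement about a single work function: for every witness $(u,x,y,z)$ there is a witness of the form $(r_t,x',y',z')$ with no larger value, proved by a case analysis combining resolving, quasi-convexity, and Lipschitzness. Identifying and proving that reduction is the theorem; your proposal stops just short of where it begins.
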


\begin{proof}
    For any work function $w_t$ and any points $u, x, y, z \in \pointset$, define:
    \begin{equation}
        \label{eqn:potential}
        \Phi(w_t, u, x, y, z) = w_t(\bar{u}^3) + w_t(ux\bar{y}) + w_t(uy\bar{z}) + w_t(uz\bar{x}) - d(x, y) - d(y, z) - d(z, x)
        ~.
    \end{equation}



    We remark that function $\Phi$ by definition is invariant to permuting $x, y, z$ while keeping the same parity.
    It is also invariant to changing the parity of the permutation of $x, y, z$ \emph{and} changing them into their antipodes $\bar{x}, \bar{y}, \bar{z}$ respectively, since Eqn.~\eqref{eqn:antipode} implies: 
    \begin{align*}
        \Phi(w_t, u, x, y, z)
        &
        = w_t(\bar{u}^3) + w_t(ux\bar{y}) + w_t(uy\bar{z}) + w_t(uz\bar{x}) - d(x, y) - d(y, z) - d(z, x) \\
        &
        = w_t(\bar{u}^3) + w_t(ux\bar{y}) + w_t(uy\bar{z}) + w_t(uz\bar{x}) - d(\bar{x}, \bar{y}) - d(\bar{y}, \bar{z}) - d(\bar{z}, \bar{x}) \\
        &
        = \Phi(w_t, u, \bar{y}, \bar{x}, \bar{z})
        ~.
    \end{align*}


    For any $0 \le t \le T$, define the potential after request $r_t$ to be:
    \begin{equation}
        \label{eqn:potential-t}
        \Phi_t = \min_{u, x, y, z \in \pointset} \Phi(w_t, u, x, y, z)
        ~.
    \end{equation}
    
    The next lemma characterizes the minimizer in the above definition of potentials. 
    We sketch its proof at the end of the section, and defer the complete proof to Appendix~\ref{app:proof-3-server}.
    
    \medskip

\begin{lemma}
    \label{lem:3-server}
    For any $3$-sever instance on a circle metric, and any time $t$, there are points $x_t, y_t, z_t \in \pointset$ such that $\Phi_t = \Phi(r_t, x_t, y_t, z_t)$.
\end{lemma}

    \medskip

    On the one hand, by $w_T(X) \le \OPT + 3\Delta$ for any $X \in \pointset^3$ and $d(x, y), d(y, z), d(z, x) \ge 0$:
    \[
        \Phi_T \le 4 \OPT + 12\Delta
        ~.
    \]

    On the other hand, by $w_0(X) \ge 0$ for any $X \in \pointset^3$ and by $d(x, y), d(y, z), d(z, x) \le \Delta$:
    \[
        \Phi_0 \ge -3\Delta
        ~.
    \]

    Hence:
    \[
        \Phi_T - \Phi_0 \le 4 \OPT + 15 \Delta = 4 \OPT + O(1)
        ~.
    \]

    Finally, let $x_t, y_t, z_t \in \pointset$ be the points from Lemma~\ref{lem:3-server}.
    We have:
    \begin{align*}
        \Phi_t - \Phi_{t-1}
        &
        = \Phi(w_t, r_t, x_t, y_t, z_t) - \min_{u, x, y, z \in \pointset} \Phi(w_{t-1}, u, x, y, z) \\
        &
        \ge \Phi(w_t, r_t, x_t, y_t, z_t) - \Phi(w_{t-1}, r_t, x_t, y_t, z_t) \\[1.5ex]
        &
        = \big( w_t(\bar{r}_t^3) - w_{t-1}(\bar{r}_t^3) \big) + \big( w_t(r_t x_t \bar{y}_t) - w_{t-1}(r_t x_t \bar{y}_t) \big) \\
        & \quad
        + \big( w_t(r_t y_t \bar{z}_t) - w_{t-1}(r_t y_t \bar{z}_t) \big) + \big( w_t(r_t z_t \bar{x}_t) - w_{t-1}(r_t z_t \bar{x}_t) \big)
        \\[1ex]
        & \ge w_t(\bar{r}_t^3) - w_{t-1}(\bar{r}_t^3)
        \tag{Monotonicity in Lemma~\ref{lem:work-function}} \\[1.5ex]
        &
        = \max_{X \in \Omega^k} \big( w_t(X) - w_{t-1}(X) \big)
        ~.
        \tag{Lemma~\ref{lem:antipode}, and Duality in Lemma~\ref{lem:work-function}}
    \end{align*}

    The theorem now follows by the Extended Cost Lemma in Lemma~\ref{lem:work-function}.
\end{proof}

To prove Lemma~\ref{lem:3-server}, it is more convenient to consider an equivalent form of our potentials defined in Equations~\eqref{eqn:potential} and \eqref{eqn:potential-t}.%
\footnote{We choose the form in Eqn.~\eqref{eqn:potential} for consistency with the discussion in Section~\ref{sec:potential}.}
Define:
\begin{equation}
    \label{eqn:potential-alternative}
    \Phi^\star(w_t, u, x, y, z) = w_t(\bar{u}^3) + w_t(ux\bar{y}) + w_t(uy\bar{z}) + w_t(uz\bar{x})
    ~.
\end{equation}

Further let $P$ denote the set of point tuples $(x, y, z) \in \pointset^3$ that are \emph{not} in the interior of the same semi-circle:
%
\[
    P = \big\{ (x, y, z) \in \pointset^3 \mid d(x,y) + d(y,z) + d(z,x) = 2\Delta \big\}
    ~.
\]

%

We establish the equivalence through the next lemma.

\begin{lemma}
    \label{lem:potential-equivalence}
    For any $1 \le t \le T$:
    \[
        \Phi_t = \min_{u \in \pointset, (x, y, z) \in P} \Phi^\star(w_t, u, x, y, z) - 2\Delta
        ~.
    \]
\end{lemma}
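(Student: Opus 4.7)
I would prove the two inequalities separately. The $\le$ direction is immediate: for any $u \in \pointset$ and any $(x, y, z) \in P$, the defining condition $d(x,y) + d(y,z) + d(z,x) = 2\Delta$ gives $\Phi(w_t, u, x, y, z) = \Phi^\star(w_t, u, x, y, z) - 2\Delta$, so taking the minimum on both sides over $u \in \pointset$ and $(x,y,z) \in P$ yields the desired upper bound on $\Phi_t$.

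For the $\ge$ direction, it suffices to exhibit, for every $(u, x, y, z) \in \pointset^4$, a tuple $(u', x', y', z')$ with $(x', y', z') \in P$ such that $\Phi^\star(w_t, u', x', y', z') - 2\Delta \le \Phi(w_t, u, x, y, z)$. If $(x, y, z) \in P$ already, take the tuple itself. Otherwise the three points lie in some common open semicircle, and $s := d(x,y) + d(y,z) + d(z,x) < 2\Delta$. Using the cyclic invariance of $\Phi^\star$ (inherited from the parity-preserving symmetry of $\Phi$ noted after Eqn.~\eqref{eqn:potential}), I would cyclically relabel so that $y$ is the middle point on the shorter arc from $x$ to $z$; then $d(x,y) + d(y,z) = d(x,z) =: d < \Delta$ and $s = 2d$.

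The key step is then the replacement $(u', x', y', z') := (u, \bar{z}, y, z)$, which lies in $P$ because $z$ and $\bar{z}$ are antipodal. Exactly two of the four $w_t$-terms in $\Phi^\star$ change: $w_t(u x \bar{y})$ becomes $w_t(u \bar{z} \bar{y})$, and $w_t(u z \bar{x})$ becomes $w_t(u z z)$. Since $d(x, \bar{z}) = d(\bar{x}, z) = \Delta - d$, Lipschitzness of $w_t$ (Lemma~\ref{lem:work-function}) bounds the total increase in $\Phi^\star$ by $2(\Delta - d) = 2\Delta - s$, which rearranges to $\Phi^\star(w_t, u, \bar{z}, y, z) - 2\Delta \le \Phi^\star(w_t, u, x, y, z) - s = \Phi(w_t, u, x, y, z)$, as required.

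The main subtlety is identifying the right replacement. More natural attempts---swapping $y$ for $\bar{y}$, $\bar{x}$, or $\bar{z}$---also modify exactly two of the four $w_t$-terms, but the corresponding Wasserstein distances are up to $\Delta$, yielding an increase bound of $2\Delta$ that is too loose whenever $d > 0$. Replacing the \emph{endpoint} $x$ with $\bar{z}$ is precisely what makes both affected distances equal to the tight value $\Delta - d$, so that the Lipschitz bound exactly matches the available slack $2\Delta - s = 2(\Delta - d)$.
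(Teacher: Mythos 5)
Your proposal is correct and follows essentially the same route as the paper: the $\le$ direction is definitional, and for $\ge$ you replace one endpoint of the short arc by the antipode of the other (keeping the middle point), with the Lipschitz cost $2(\Delta - d)$ exactly absorbed by the distance slack $2\Delta - s$. The only cosmetic differences are that the paper works at an attained minimizer and moves $y$ to $\bar{x}$ with $z$ as the middle point, whereas you argue pointwise and move $x$ to $\bar{z}$ with $y$ as the middle point; these are equivalent up to the cyclic symmetry you invoke.
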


\begin{proof}
    By the definitions of $\Phi$ and $\Phi^\star$, we have:
    \[
        \Phi_t \le \min_{u \in \pointset, (x, y, z) \in P} \Phi^\star(w_t, u, x, y, z) - 2\Delta
        ~.
    \]

    It remains to show the opposite direction.
    Suppose that $\Phi_t = \Phi(w_t, u_t, x_t, y_t, z_t)$.
    If $(x_t, y_t, z_t) \in P$ the lemma holds by the definition of $P$.

    Next suppose that $(x_t, y_t, z_t) \notin P$.
    By the symmetry of $x, y, z$ in the definition of $\Phi$, we may assume without loss of generality that $z_t \in \wideparen{x_t y_t}$.
    Recall that the metric is a circle of diameter $\Delta$, we have:
    %
    \begin{equation}
        \label{eqn:semi-circle-assumption}
        d(x_t, z_t) + d(y_t, z_t) = d(x_t, y_t) < \Delta
        ~.
    \end{equation}
    %
    Moving $y_t$ to $\bar{x}_t$ decreases the distance terms by $2 d(y_t, \bar{x}_t)$, and increases the work functions by at most this amount by Lipschitzness:
    \begin{align*}
        \Phi_t
        &
        = \Phi(w_t, u_t, x_t, y_t, z_t) \\
        & 
        = \Phi^\star(w_t, u_t, x_t, y_t, z_t) - d(x_t, y_t) -d(y_t, z_t) - d(z_t, x_t)
        \tag{Equations~\eqref{eqn:potential}, \eqref{eqn:potential-alternative}} \\
        &
        = \Phi^\star(w_t, u_t, x_t, y_t, z_t) - 2 d(x_t, y_t)
        \tag{Eqn.~\eqref{eqn:semi-circle-assumption}} \\
        &
        = \Phi^\star(w_t, u_t, x_t, y_t, z_t) + d(x_t, \bar{y}_t) + d(\bar{x}_t, y_t) - 2\Delta
        \tag{Eqn.~\eqref{eqn:antipode}} \\
        & 
        \ge \Phi^\star(w_t, u_t, x_t, \bar{x}_t, z_t) - 2\Delta
        \tag{Lipschizness in Lemma~\ref{lem:work-function}} \\
        &
        = \Phi(w_t, u_t, x_t, \bar{x}_t, z_t) \\
        &
        \ge \Phi_t
        ~.
    \end{align*}

    Since both ends are the same, this sequence of inequalities must hold with equality.
    In particular, since $(x_t, \bar{x}_t, z_t) \in P$, we have:
    \[
        \Phi_t = \Phi^\star(w_t, u_t, x_t, \bar{x}_t, z_t) - 2\Delta
        \ge \min_{u \in \pointset, (x, y, z) \in P} \Phi^\star(w_t, u, x, y, z) -2\Delta
    \]
    as needed.
\end{proof}

\begin{proof}[Proof Sketch of Lemma~\ref{lem:3-server}]
    By Lemma~\ref{lem:potential-equivalence}, it suffices to prove that for any $u \in \pointset$ and any $(x, y, z) \in P$, there are $x', y', z' \in P$ such that:
    \begin{equation}
        \label{eqn:3-server-sketch}
        \Phi^\star(u, x, y, z) \ge \Phi^\star(r_t, x', y', z')
        ~.
    \end{equation}

    Specifically, our argument will consider $x', y', z' \in \{u, \bar{u}, x, \bar{x}, y, \bar{y}, z, \bar{z} \}$.
    
    Following the literature, we say that $w_t(X)$ resolves from $x \in X$ for some configuration $X \in \pointset^k$ if in the cost-minimizing solution the server that serves the last request $r_t$ is at $x \in X$ in the end.

    Observe that both sides are the sum of the work function $w_t$'s values on four configurations.
    We will prove it through a sequence of transformations of the following kinds:
    
    \begin{enumerate}
        \item \textbf{Resolving:~}
            Consider all possible points that the configurations on the left resolve from.
            For each configuration $X$, if $w_t(X)$ resolves from $x \in X$, write $w_t(X) = w_t(X-x+r_t)+d(x, r_t)$, where $X-x+r_t$ denotes the configuration obtained by removing $x$ from and adding $r_t$ to $X$.
            We shall reorder the points in each configuration so that $r_t$ is the first after the resolving step.
            \medskip
        \item \textbf{Quasi-convexity:~}
            Pick two configurations $X, Y$ on the left and apply quasi-convexity to swap a pair of points in $X, Y$.
            Importantly, we will always apply the resolving steps first, so that $r_t \in X, Y$ which will not be swapped according to Lemma~\ref{lem:work-function}.
            \medskip
        \item \textbf{Lipschitzness:~}
            For each $d(x, y)$ on the left, find a configuration $X$ containing one of the points, say, $x$, and write $w_t(X) + d(x, y) \ge w_t(X-x+y)$.
    \end{enumerate}

    The full proof unfortunately involves a tedious case analysis, which we defer to Appendix~\ref{app:proof-3-server}.
    Below we present some representative cases to demonstrate the transformations.
    The case numbers are not consecutive because they are cherry-picked from the full analysis.
    For ease of notations we omit the subscript $t$ and write $w = w_t$ and $r = r_t$ below.

    \subparagraph{Case 0:}
    If $w(ux\bar{y}), w(uy\bar{z}), w(uz\bar{x})$ all resolve from $u$, we only need the resolving and Lipschitzness steps, effectively moving all copies of $u$ to $r$.
    \begin{align*}
        \Phi^\star(w, u, x, y, z)
        &
        = w(\bar{u}^3) + w(rx\bar{y}) + w(ry\bar{z}) + w(rz\bar{x}) + 3d(r, u) \tag{Resolving} \\
        &
        \ge w(\bar{r}^3) + w(rx\bar{y}) + w(ry\bar{z}) + w(rz\bar{x}) \tag{Lipschizness} \\[.5ex]
        &
        = \Phi^{\star}(w, r, x, y, z)
        ~.
    \end{align*}
    
    \subparagraph{Subcase 1a:}
    Suppose that exactly two of $w(ux\bar{y}), w(uy\bar{z}), w(uz\bar{x})$ resolve from points other than $u$ (Case 1), and further suppose that these two resolve from points other than the pair of antipodes in their final configurations (Subcase 1a).
    By symmetry, assume without loss of generality that $w(ux\bar{y})$ resolves from $u$.
    The pair of antipodes in the other two are $z$ and $\bar{z}$, and therefore $w(uy\bar{z})$ resolves from $y$ and $w(uz\bar{x})$ resolves from $\bar{x}$.

    We once again only need the resolving and Lipschitzness steps, since moving $y$ and $\bar{x}$ to $r$ allows us to move both $\bar{y}$ and $x$ in the second configuration to $\bar{r}$.
    Finally, the first configuration must resolve from a copy of $\bar{u}$, moving which to $r$ allows us to move $u$ to $\bar{r}$ in the second configuration.
    Formally:
    \begin{align*}
        \Phi^{\star}(w, u, x, y, z)
        &
        = w(r\bar{u}^2) + w(ux\bar{y}) + w(ru\bar{z}) + w(ruz) \\
        & \qquad 
        + d(r, \bar{u}) + d(r, y) + d(r, \bar{x}) \tag{Resolving} \\
        &
        \ge w(r\bar{u}^2) + w(\bar{r}^3) + w(ru\bar{z}) + w(ruz) \tag{Lipschizness} \\[.5ex]
        &
        = \Phi^{\star}(w, r, u, z, \bar{u})
    \end{align*}
    

    \subparagraph{Subcase 2c:}
    Suppose that exactly one of $w(ux\bar{y}), w(uy\bar{z}), w(uz\bar{x})$ resolve from points other than $u$.
    By symmetry, assume without loss of generality that $w(ux\bar{y})$ and $w(uy\bar{z})$ resolve from $u$, and $uz\bar{x}$ resolves from $z$ (Case 2).
    Further suppose that $r \in \wideparen{\bar{y}\bar{z}}$~~(Subcase 2c).
    Then the resolving step is as follows, with an rearrangement of the distance terms:
    \begin{align*}
        \Phi^{\star}(w, u, x, y, z)
        &
        = w(r\bar{u}^2) + w(rx\bar{y}) + w(ry\bar{z}) + w(ru\bar{x}) + d(r, \bar{u}) \\
        & \qquad
        + 2 d(r, u) + d(r, z) \tag{Resolving} \\
        &
        = w(r\bar{u}^2) + w(rx\bar{y}) + w(ry\bar{z}) + w(ru\bar{x}) \\
        & \qquad
        + \Delta + d(r, u) + d(r, z)
        ~.
        \tag{Eqn.~\eqref{eqn:antipode}} 
    \end{align*}

    Next, we will pick $x' \in \{x, \bar{x}\}$, $y' \in \{y, \bar{y} \}$, and $z' \in \{u, \bar{u}\}$ to prove Eqn.~\eqref{eqn:3-server-sketch}.
    We remark that the points $x, y, \bar{x}, \bar{y}$ partition the circle into four arcs.
    Depending on which of the four arcs contain point $u$, there two exactly two choices of $x', y', z'$ in each case subject to $(x', y', z') \in P$.
    For example, if $u \in \wideparen{xy}$ then $(\bar{x}, \bar{y}, u), (x, y, \bar{u}) \in P$.
    The following analysis therefore consists of four subcases.

    If $u \in \wideparen{\bar{x}\bar{y}}$~, consider $(x', y', z') = (x, y, u) \in P$.
    Then, we need to get configurations $\bar{r}^3$, $rx\bar{y}$, $ry\bar{u}$, and $ru\bar{x}$ from the transformations.
    The equation above already has configurations $rx\bar{y}$ and $ru\bar{x}$.
    Further applying quasi-convexity to swap a copy of $\bar{u}$ in $r\bar{u}^2$ with $y$ in $ry\bar{z}$ gives configuration $ry\bar{u}$ (and configuration $r\bar{z}\bar{u}$).
    Finally, Lipschitzness transforms configuration $r\bar{z}\bar{u}$ into $\bar{r}^3$.
    For clarity, we will write the two points being swapped in each quasi-convexity step in bold hereafter.
    Formally:
    \begin{align*}
        \Phi^{\star}(w, u, x, y, z)
        &
        \ge w(r\bm{y}\bar{u}) + w(rx\bar{y}) + w(r\bm{\bar{u}}\bar{z}) + w(ru\bar{x}) \\
        & \qquad
        + \Delta + d(r, u) + d(r, z)
        \tag{Quasi-convexity} \\
        &
        \ge w(ry\bar{u}) + w(rx\bar{y}) + w(\bar{r}^3) + w(ru\bar{x})
        \tag{Lipschitzness} \\[.5ex]
        &
        = \Phi^{\star}(w, r, x, y, u)
        ~.
    \end{align*}

    %

    The other three subcases are the crux of the analysis where we use the assumption $r \in \wideparen{\bar{y}\bar{z}}$~~ to conclude that $\bar{y} \in \wideparen{rz}$.
    This implies $d(r, z) = d(r, \bar{y}) + d(\bar{y}, z)$, which allows us to eliminate point $\bar{z}$ that is not in the final configurations:
    \begin{align*}
        \Phi^{\star}(w, u, x, y, z)
        &
        = w(r\bar{u}^2) + w(rx\bar{y}) + w(ry\bar{z}) + w(ru\bar{x}) + \Delta + d(r, u) + d(r, \bar{y}) + d(\bar{y}, z) \\
        &
        \ge w(r\bar{u}^2) + w(rx\bar{y}) + w(ry^2) + w(ru\bar{x}) + \Delta + d(r, u) + d(r, \bar{y})
        ~.
        \tag{Lipschizness}
    \end{align*}

    If $u \in \wideparen{xy}$, consider $(x', y', z') = (x, y, \bar{u}) \in P$.
    Then, we need to get configurations $\bar{r}^3$, $rx\bar{y}$, $ryu$, and $r\bar{u}\bar{x}$ from the transformations.
    We already have configuration $rx\bar{y}$.
    The remaining argument first applies quasi-convexity to $ry^2$ and $ru\bar{x}$ to obtain configuration $ryu$ in the final formula, and configuration $ry\bar{x}$.
    Further apply quasi-convexity to $r\bar{u}^2$ and $ry\bar{x}$ to obtain configuration $r\bar{u}\bar{x}$ in the final formula.
    Finally, Lipschitzness transforms the last configuration into $\bar{r}^3$.
    Formally:
    \begin{align*}
        \Phi^{\star}(w, u, x, y, z)
        &
        \ge w(r\bar{u}^2) + w(rx\bar{y}) + w(r\bm{\bar{x}}y) + w(ru\bm{y}) \\
        & \qquad
        + \Delta + d(r, u) + d(r, \bar{y})
        \tag{Quasi-convexity} \\[.5ex]
        &
        \ge w(r\bm{y}\bar{u}) + w(rx\bar{y}) + w(r\bar{x}\bm{\bar{u}}) + w(ruy) \\
        & \qquad
        + \Delta + d(r, u) + d(r, \bar{y})
        \tag{Quasi-convexity} \\[.5ex]
        &
        \ge w(\bar{r}^3) + w(rx\bar{y}) + w(r\bar{x}\bar{u}) + w(ruy)
        \tag{Lipschitzness} \\[1ex]
        &
        = \Phi^{\star}(w, r, x, y, \bar{u})
        ~.
    \end{align*}

    If $u \in \wideparen{y\bar{x}}$, consider $(x, \bar{y}, u) \in P$.
    Similar to the previous subcase, comparing the existing configurations with those in the final formula gives a sequence of transformations below:
    \begin{align*}
        \Phi^{\star}(w, u, x, y, z)
        &
        \ge w(r\bar{u}^2) + w(rx\bm{y}) + w(ry\bm{\bar{y}}) + w(ru\bar{x})\\
        & \qquad
        + \Delta + d(r, u) + d(r, \bar{y})
        \tag{Quasi-convexity} \\[.5ex]
        &
        \ge w(r\bar{u}\bm{y}) + w(rxy) + w(r\bm{\bar{u}}\bar{y}) + w(ru\bar{x}) \\
        & \qquad
        + \Delta + d(r, u) + d(r, \bar{y})
        \tag{Quasi-convexity} \\[.5ex]
        &
        \ge w(\bar{r}^3) + w(rxy) + w(r\bar{u}\bar{y}) + w(ru\bar{x}) \tag{Lipschizness} \\[1ex]
        &
        = \Phi(w, r, x, \bar{y}, u)
        ~.
    \end{align*}

    If $u \in \wideparen{x\bar{y}}$, 
    we shall choose the final configurations depending on the result of applying quasi-convexity to $rx\bar{y}$ and $ru\bar{x}$.
    If $w(rx\bar{y}) + w(ru\bar{x}) \ge w(rx\bm{u}) + w(r\bm{\bar{y}}\bar{x})$, consider $(x', y', z') = (\bar{x}, y, u) \in P$ because the right-hand-side already gives two of the final configurations:
    \begin{align*}
        \Phi^{\star}(w, u, x, y, z)
        &
        \ge w(r\bar{u}^2) + w(rx\bm{u}) + w(ry^2) + w(r\bm{\bar{y}}\bar{x}) \\
        & \qquad
        + \Delta + d(r, u) + d(r, \bar{y})
        \tag{Quasi-convexity} \\[.5ex]
        &
        \ge w(r\bar{u}\bm{y}) + w(rxu) + w(r\bm{\bar{u}}y) + w(r\bar{y}\bar{x}) \\
        & \qquad
        + \Delta + d(r, u) + d(r, \bar{y}) \tag{Quasi-convexity} \\[.5ex]
        &
        \ge w(\bar{r}^3) + w(rxu) + w(r\bar{u}y) + w(r\bar{y}\bar{x}) \tag{Lipschizness} \\[1ex]
        &
        = \Phi^{\star}(w, r, \bar{x}, y, u)
        ~.
    \end{align*}

    Otherwise, i.e., if $w(rx\bar{y}) + w(ru\bar{x}) \geq w_t(rx\bm{\bar{x}}) + w(ru\bm{\bar{y}})$, consider $(x', y', z') = (x, \bar{y}, \bar{u}) \in P$ because the right-hand-side gives a configuration $ru\bar{y}$ in the final formula.
    Formally:
    \begin{align*}
        \Phi^{\star}(w, u, x, y, z)
        &
        \ge w(r\bar{u}^2) + w(rx\bm{\bar{x}}) + w(ry^2) + w(ru\bm{\bar{y}}) \\
        & \qquad
        + \Delta + d(r, u) + d(r, \bar{y})
        \tag{Quasi-convexity} \\[.5ex]
        &
        \ge w(r\bar{u}^2) + w(rx\bm{y}) + w(r\bm{\bar{x}}y) + w(ru\bar{y}) \\
        & \qquad
        + \Delta + d(r, u) + d(r, \bar{y}) \tag{Quasi-convexity} \\[.5ex]
        &
        \ge w(r\bar{u}\bm{y}) + w(rxy) + w(r\bar{x}\bm{\bar{u}}) + w(ru\bar{y}) \\
        & \qquad
        + \Delta + d(r, u) + d(r, \bar{y}) \tag{Quasi-convexity} \\[.5ex]
        &
        \ge w(\bar{r}^3) + w(rxy) + w(r\bar{x}\bar{u}) + w(ru\bar{y}) \tag{Lipschizness} \\[1ex]
        &
        = \Phi^{\star}(w, x, \bar{y}, \bar{u})
        ~.
    \end{align*}
\end{proof}

\section{Canonical Potentials}
\label{sec:potential}

This section considers a canonical class of potential functions which capture as special cases many existing potential functions in the literature, including ours in Section~\ref{sec:3-server}.
We shall examine whether these canonical potentials could resolve the deterministic $k$-server conjecture on \emph{general metrics}.
Since we consider general metrics, we will assume without loss of generality that for any point $x \in \pointset$ there is an antipode $\bar{x} \in \pointset$ such that for any $y \in \pointset$, $d(x, y) + d(y, \bar{x}) = \Delta$, where $\Delta > 0$ is the diameter of the metric.

A canonical potential function is parameterized by a subset $P$ of distinct pairs of points $(i,j) \ne (i',j')$, where $i, i' \in [k]$ and $j, j' \in [k-1]$.
It takes as input a work function $w_t$, a point $u$, and an ordered set of points $X = (x_{ij})_{i \in [k], j \in [k-1]}$.
For any $i \in [k]$, we consider $X_i = \{ x_{ij} \}_{j \in [k-1]} \in \pointset^{k-1}$ as a configuration of $k-1$ points, and $uX_i$ as a configuration of $k$ points.
The canonical potential is: 
\begin{equation}
    \label{eqn:canonical-potential}
    \Phi(w_t, u, X) = w_t(\bar{u}^k) + \sum_{i=1}^{k} w_t(uX_i) - \sum_{\{(i,j), (i',j')\} \in P} d(x_{i,j}, x_{i',j'})
    ~.
\end{equation}

The distance terms can enforce various combinatorial constraints on the points in $X$.
For example, letting a point pair $(i,j), (i',j')$ be in $P$ may impose a constraint that the two points are antipodes, i.e., $x_{i,j} = \bar{x}_{i',j'}$.
Further, the potential function in Section~\ref{sec:3-server} uses the distance terms to ensure that three points are not on the same semi-circle in the case of circle metrics (Lemma~\ref{lem:potential-equivalence}).


Then, let the potential value at any time $0 \le t \le T$ be:
\begin{equation}
    \label{eqn:canonical-potential-value}
    \Phi_t = \min_{u, X} \Phi(w_t, u, X)
    ~.
\end{equation}

Although one may further generalize the canonical potentials, such as considering arbitrary coefficients of the distance terms instead of only $0$ and $-1$, all examples for $k$-competitiveness to our knowledge only need the above more restricted form. \footnote{The potential proposed by \citet{KoutsoupiasP:JACM:1995} is used to show $(2k-1)$-competitiveness in general metric, which is not captured by our canonical form, but can be captured by a generalized version. See Appendix B for details.}
Below are some example potential functions in the literature, and we state without proofs how to write them in the canonical form (up to an fixed additive term). 
Appendix~\ref{app:potential} includes the detailed arguments for completeness. 

\begin{example}[This Paper, Circle Metrics, $k = 3$]
    \label{exa:this-paper}
    Our potential function in Eqn.~\eqref{eqn:potential} can be equivalently written in the canonical form as:
    \begin{align*}
        \Phi_\text{\rm Canonical} (w_t, u, X)
        &
        = w_t(\bar{u}^3) + w_t(uX_1) + w_t(uX_2) + w_t(uX_3) \\
        & \quad
        - d(x_{12}, x_{21}) - d(x_{22}, x_{31}) - d(x_{32}, x_{11}) \tag{Antipode constraints} \\
        & \quad
        - d(x_{11}, x_{21}) - d(x_{21}, x_{31}) - d(x_{31}, x_{11}) \tag{Not on a semi-circle}
        ~.
    \end{align*}
\end{example}

\medskip

\begin{example}[\citet{chrobak1991server}, General Metrics, $k=2$]
    \label{exa:chrobak-larmore}
    For any work function $w_t$ from a $2$-server instance with $r_t$ as the last request, their original potential function is:
    \[
        \Phi^\text{\rm CL} (w_t) = \min_{x,y,z \in \pointset} w_t(r_t x) + w_t(r_t y) + w_t(r_t z) - d(r_t, x) - d(y, z)
        ~,
    \]
    Since work functions are Lipchitzness, and because each of $x$ and $z$ only appears in one configuration, it is equivalent to
    \begin{align*}
        \Phi^\text{\rm CL} (w_t) & = \min_{y \in \pointset} w_t(r_t \bar{r}_t) + w_t(r_t y) + w_t(r_t \bar{y}) - 2\Delta \\
        & = w_t(\bar{r}^2_t) + w_t(r_t y) + w_t(r_t \bar{y}) - \Delta
        ~.
    \end{align*}
    Finally, even if we replace $r_t$ with an unconstrained point $u$, the potential would still achieve its minimum value at $u = r_t$.
    We can therefore write it in the canonical form as:
    \[
        \Phi^\text{\rm CL}_\text{\rm Canonical}(w_t, u, X) = w_t(\bar{u}^2) + w_t(ux_{11}) + w_t(ux_{21}) - d(x_{11}, x_{21})
        ~.
    \]
\end{example}

\medskip

\begin{example}[\citet*{BeinCL:TCS:2002}, Manhattan plane, $k=3$]
    \label{exa:BeinCL}
    For any work function $w_t$ from a $3$-server instance on a Manhattan plane with $r_t$ as the last request, their original potential is defined as:
    \begin{align*}
        \Phi^\text{\rm BCL} (w_t)
        = \min_{a,a',b,b',p \in \pointset, C \in \pointset^3} & ~
        w_t(C) + w_t(r_t pb) + w_t(r_t pb') + w_t(r_t aa') \\
        & \quad
        - d(C, r_t^3) - d(b, b') - d(p, a) - d(p, a')
        ~.
    \end{align*}
    Recall that we focus on general metrics that admit antipodes.
    The distance term $d(C, r_t^3)$ effectively enforces that $C = \bar{r}_t^3$, i.e.:
    \begin{align*}
        \Phi^\text{\rm BCL} (w_t)
        = \min_{a,a',b,b',p \in \pointset} & ~
        w_t(\bar{r}_t^3) + w_t(r_t pb) + w_t(r_t pb') + w_t(r_t aa') \\
        & \quad
        - d(b, b') - d(p, a) - d(p, a') - 3\Delta
        ~.
    \end{align*}
    Further, since work functions are Lipschitz and because each of $a$ and $a'$ appears in only one configuration.
    The above is equivalent to:
    \begin{align*}
        \Phi^\text{\rm BCL} (w_t)
        = \min_{b,b',p \in \pointset} & ~
        w_t(\bar{r}_t^3) + w_t(r_t pb) + w_t(r_t pb') + w_t(r_t \bar{p}^2)
        - d(b, b') - 5\Delta
        ~.
    \end{align*}
    Finally, even if we replace $r_t$ with an unconstrained point $u$, the potential would still achieve its minimum value at $u = r_t$.
    We can therefore write it in the canonical form as:
    \begin{align*}
        \Phi^\text{\rm BCL}_\text{\rm Canonical} (w_t, u, X)
        &
        = w_t(\bar{u}^3) + w_t(uX_1) + w_t(uX_2) + w_t(uX_3) - d(\overbrace{x_{12}}^{b}, \overbrace{x_{22}}^{b'}) \\
        & \qquad 
        - d(x_{11}, x_{31}) - d(x_{11}, x_{32})
        - d(x_{21}, x_{31}) - d(x_{21}, x_{32})
        ~.
    \end{align*}
\end{example}

\medskip

\begin{example}[\citet{CoesterK:ICALP:2021}, Various Metrics, e.g., Multi-ray Metrics, General Metrics with $k=2$, etc.]
    \label{exa:Coester-Koutsoupias}
    Their original potential function is:
    %
    \[
        \Phi^\text{\rm CK} (w, x_1, x_2, \dots, x_k) = \sum_{i = 1}^{k+1} w \big( \bar{x}_{i-1}^{i-1} x_i x_{i+1}\dots x_k \big)
        ~.
    \]
    To write it in the canonical form, we use the distance terms to impose appropriate antipode constraints, which implicitly enforce the identity constraints as well:
    \[
        \Phi^\text{\rm CK}_\text{\rm Canonical} (w_t, u, X) = w_t(\bar{u}^k) + \sum_{i=1}^k w_t(uX_i) - \sum_{i=1}^k \sum_{j=i+1}^k \sum_{\ell=i}^{k-1} d(x_{i\ell}, x_{ji})
        ~.
    \]
\end{example}

\subsection{Canonical Argument}

Recall that we typically aim to prove Equations~\eqref{eqn:potential-approach-stepwise} and \eqref{eqn:potential-approach-bounded}, which we restate below:
\begin{align*}
    \Phi_t - \Phi_{t-1} & \ge \max_{X \in \pointset^k} \big( w_t(X) - w_{t-1}(X) \big) ~, && \forall 1 \le t \le T; \\
    \Phi_T - \Phi_0 & \le (k+1) \OPT + O(1) ~.
\end{align*}

Further, the canonical potentials satisfy the second inequality by definition.
It remains to argue that the change in the potential values from step $t-1$ to step $t$ upper bounds the extended cost on the right-hand-side.
To do so, we only need to prove that the potential achieves its minimum value when $u$ equals the last request (see, e.g., the proof of Theorem~\ref{thm:3-server} given Lemma~\ref{lem:3-server}).
In other words, if for any work function $w$ from a $k$-server instance with $r$ as the last request:
\begin{equation}
    \label{eqn:canonical-argument}
    \exists X^* \in \pointset^{k(k-1)} ~: \qquad (r, X^*) \in \arg\min_{u, X} ~ \Phi(w, u, X)
    ~.
\end{equation}

In fact, existing arguments do not inspect the request history behind the work function.
All they need is that the work function is induced from a collection of support sets and their corresponding values, where the last request $r_t$ belongs to every support set.
More precisely, suppose that $(S_i, v_i)$, $1 \le i \le m$, are a collection of support-set-value pairs, where $S_i \in \pointset^k$ and $v_i \in \R$.
Then, the corresponding work function $w$ is:
\begin{equation}
    \label{eqn:support-set-induce-function}
    w(X) = \min_{1 \le i \le m} \Big( v_i + d(S_i, X) \Big)
    ~.
\end{equation}

This leads to a natural question:
\emph{Is there a canonical potential that satisfies Eqn.~\eqref{eqn:canonical-argument} for any function $w$ defined by a collection of support-set-value pairs and the above Eqn.~\eqref{eqn:support-set-induce-function}, and any point $r$ that belongs to every support set?}

\subsection{Limitation of Canonical Potentials}




Unfortunately, the answer is negative even for $k=3$ servers, via a computer-aided argument.
Recall that the case of $k=2$ has already been resolved by Chrobak and Larmore~\cite{chrobak1991server} using a potential that can be rewritten in the canonical form.
This negative result shows an intriguing separation between the $2$-server and $3$-server problems.

\begin{theorem}
    \label{thm:canonical-limit}
    No canonical potential (which is defined for competitive ratio $\Gamma = k$) satisfies the canonical argument Eqn.~\eqref{eqn:canonical-argument}  for any function $w$ defined by a collection of support-set-value pairs and the above Eqn.~\eqref{eqn:support-set-induce-function}, and any point $r$ that belongs to every support set, even when $k = 3$.
\end{theorem}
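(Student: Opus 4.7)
The plan is a computer-aided refutation: for each canonical potential, parameterized by a subset $P$ of pairs among the six index pairs $(i,j) \in [3] \times [2]$, exhibit a function $w$ given by a small collection of support-set-value pairs $(S_i, v_i)$ sharing a common point $r \in \bigcap_i S_i$ such that the minimum of $\Phi(w, u, X)$ over $(u, X)$ is achieved only at some $u^* \ne r$.

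First I would cut down the $2^{15}$ candidate subsets by exploiting the natural symmetries of the canonical form: the group $S_3$ permuting the three $i$-indices and the three independent transpositions swapping $x_{i1}$ with $x_{i2}$ within each block $X_i$. Only orbit representatives need to be considered. For many representatives one can already rule out the canonical argument cheaply by the degenerate case in which $w$ is induced by a single support set $S \ni r$ with value $0$, so that $w(X) = d(S, X)$ is an affine function of $X$ and the question reduces to finitely many arithmetic comparisons.

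For the remaining representatives I would set up an LP-based search. Fix a small finite test metric with explicit antipode structure (e.g.\ $\{r, \bar r\}$ together with a handful of antipodal pairs of auxiliary points, with distances chosen to satisfy $d(x, y) + d(y, \bar x) = \Delta$ for all $x, y$). Fix a candidate pool of support sets $S_1, \dots, S_m$ all containing $r$, and treat the values $v_1, \dots, v_m$ as real variables. For any proposed witness $(u^*, X^*)$ with $u^* \ne r$, and any $X$ drawn from a finite candidate pool, an inner case split over which support set realizes each of the four work-function evaluations makes both $\Phi(w, u^*, X^*)$ and $\Phi(w, r, X)$ affine in the $v_i$. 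The requirement that no $(r, X)$ is a global minimizer becomes a family of strict linear inequalities; feasibility in the $v_i$ is decided by a standard LP solver, and any feasible point is the desired counterexample.

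The main obstacle is combinatorial blow-up. Even after symmetry reduction one must enumerate many subsets $P$, many witnesses $(u^*, X^*)$, and exponentially many case splits inside each LP. The search is kept tractable by starting with $m = 4$ support sets and growing $m$ adaptively only when no counterexample is found, by restricting all candidate points to a small antipodally closed ground set, and by short-circuiting as soon as any LP returns feasible. As a consistency check, the potential of Section~\ref{sec:3-server} satisfies the canonical argument on circle metrics by Lemma~\ref{lem:3-server}, so the counterexample for that particular subset $P$ must necessarily use a non-circle metric; the search should naturally discover such an instance on a small ground set. The final deliverable is a table, one row per orbit representative of $P$, listing the explicit metric, the support sets, their values, and the witness $(u^*, X^*)$, making the computer-aided argument reproducible and checkable by hand.
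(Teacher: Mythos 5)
Your proposal follows the same top-level strategy as the paper --- enumerate all $2^{15}$ canonical potentials for $k=3$ and refute each one by exhibiting a support-set-induced function $w$ on a small finite metric whose minimizer forces $u \ne r$ --- but it differs in how the counterexamples are produced. The paper does not search at all: it fixes three explicit test cases in advance (two on an $8$-point circle, one of them the Coester--Koutsoupias lazy-adversary instance, and one on a three-dimensional Manhattan cube with $w(701)=w(702)=w(704)=0$), and then the per-potential check is a direct finite minimization of $\Phi(w,u,X)$ over all $(u,X)$ in the ground set, verifying that $u=r$ is never optimal on at least one of the three cases. Your LP-over-the-values formulation is a legitimate and more automated alternative, and your symmetry reduction and your observation that the counterexample for the circle-correct potential must live on a non-circle metric (it is the Manhattan cube in the paper) are both sound. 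What the paper's route buys is a far lighter and more easily checkable certificate: three small instances with concrete values, versus a search whose output must still be assembled into such a table.

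Two cautions if you execute your plan. First, in the constraint ``$\Phi(w,r,X) > \Phi(w,u^*,X^*)$ for all $X$'' the left-hand side is a sum of minima; you cannot case-split it to a single affine form, but since a sum of minima equals a minimum of sums over assignment tuples, you should instead impose the inequality as a conjunction over all tuples of support-set choices (the case split is only legitimate on the witness side, where you need an upper bound on a minimum). Second, if you restrict the candidate pool of configurations $X$ to keep the LP tractable, a feasible point is not yet a counterexample: you must finish with a global verification over all $X$ in the metric, which at that stage is exactly the paper's direct check. Finally, as written the proposal is a search plan rather than a proof --- the theorem is only established once the witnesses are produced and verified --- but that is inherent to proposing a computer-aided argument blind.
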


\begin{proof}
    Since $k=3$, there are only $6$ points $x_{ij}$ where $1 \le i \le 3$ and $1 \le j \le 2$.
    Hence, there are only $\binom{6}{2} = 15$ pairs of distinct points, and only $2^{15}$ different canonical potentials.
    Using a computer program,%
    \footnote{See \url{https://github.com/FoolMike/K-server-Canonical-Potential} for the code.}
    we show that each of them fails to satisfy Eqn.~\eqref{eqn:canonical-argument} on at least one of the following three test cases.
    
    \subparagraph*{Test Case (a):}
    This is the counter example from Coester and Koutsoupias~\cite{CoesterK:ICALP:2021}, who used it to refute the lazy adversary conjecture.
    Consider a circle metric with $8$ points names $0$ to $7$ positioned clockwise, such that the distance between each pair of neighboring points equals $1$.
    Consider $r = 4$ and let the support sets and values be $w(456) = w(457) = 8$, $w(436) = w(437) = 9$, $w(416) = w(417) = w(425) = 10$, $w(423) = 11$.
    See Figure~\ref{fig:test-case-a} for an illustration.
    
    \subparagraph*{Test Case (b):}
    The second test case considers the same circle metric, but a different collection of support sets and values.
    Still let $r = 4$ be the last request and let the support sets and values be $w(456) = 8$, $w(405) = 9$, $w(416) = w(425) = w(446) = 10$, $w(401) = w(402) = 11$, $w(422) = 12$.
    See Figure~\ref{fig:test-case-b} for an illustration.
    
    \subparagraph*{Test Case (c):}
    The last test case is a three-dimensional Manhattan cube. 
    Name the remaining points as $0$ to $7$. Let the last request be $7$ and let the support sets and values be $w(701) = w(702) = w(704) = 0$.
    See Figure~\ref{fig:test-case-c} for an illustration.
\end{proof}

In fact, the computer verification further shows that our potential is the only canonical potential that satisfies Eqn.~\eqref{eqn:canonical-argument} for $k = 3$ and the circle metrics, noting that the following potential is equivalent to ours:
\begin{align*}
    \Phi(w_t, u, x, y, z)
    &
    = w_t(\bar{u}^3) + w_t(ux\bar{y}) + w_t(uy\bar{z}) + w_t(uz\bar{x}) \\
    & \qquad
    - d(x, y) - d(y, z) - d(z, x) - d(\bar{x}, \bar{y}) - d(\bar{y}, \bar{z}) - d(\bar{z}, \bar{x})
    ~.
\end{align*}

\tikzset{point/.style={circle,fill=black,scale=.5}}
\tikzset{metric/.style={black,thick}}
\tikzset{request/.style={star,fill=red,scale=.7}}
\tikzset{support/.style={dashed,red,thick}}

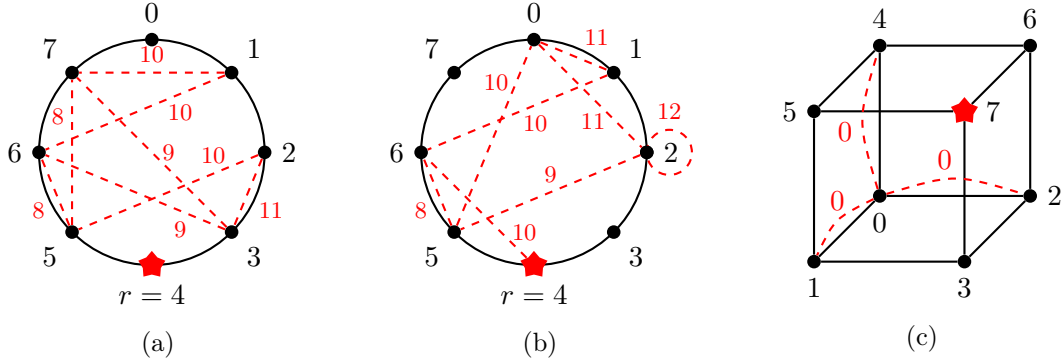
\begin{figure}[t]
    \centering
    \begin{subfigure}{.3\textwidth}
        \centering
        \begin{tikzpicture}
            \draw[white,dotted] (-2,-2.1) rectangle +(4.2,4.2);
            \filldraw[draw=black,thick,fill=none] (0,0) circle (1.5);
            \node[point,label=above:$0$] (0) at (0,1.5) {};
            \node[point,label=above right:$1$] (1) at ({1.5/sqrt(2)},{1.5/sqrt(2)}) {};
            \node[point,label=right:$2$] (2) at (1.5,0) {};
            \node[point,label=below right:$3$] (3) at ({1.5/sqrt(2)},-{1.5/sqrt(2)}) {};
            \node[request,label=below:{$r=4$}] (4) at (0,-1.5) {};
            \node[point,label=below left:$5$] (5) at (-{1.5/sqrt(2)},-{1.5/sqrt(2)}) {};
            \node[point,label=left:$6$] (6) at (-1.5,0) {};
            \node[point,label=above left:$7$] (7) at ({-1.5/sqrt(2)},{1.5/sqrt(2)}) {};
            \draw[support] (5)--(6) node[below left,midway,red] {\footnotesize $8$};
            \draw[support] (5)--(7) node[left=-1pt,pos=.75,red] {\footnotesize $8$};
            \draw[support] (3)--(6) node[below,near start,red] {\footnotesize $9$};
            \draw[support] (3)--(7) node[right,midway,red] {\footnotesize $9$};
            \draw[support] (1)--(6) node[below,near start,red] {\footnotesize $10$};
            \draw[support] (1)--(7) node[above,midway,red] {\footnotesize $10$};
            \draw[support] (2)--(5) node[above,near start,red] {\footnotesize $10$};
            \draw[support] (2)--(3) node[below right,midway,red] {\footnotesize $11$};
        \end{tikzpicture}
        \caption{}
        \label{fig:test-case-a}
    \end{subfigure}
    \begin{subfigure}{.3\textwidth}
        \centering
        \begin{tikzpicture}
            \draw[white,dotted] (-2,-2.1) rectangle +(4.2,4.2);
            \filldraw[draw=black,thick,fill=none] (0,0) circle (1.5);
            \node[point,label=above:$0$] (0) at (0,1.5) {};
            \node[point,label=above right:$1$] (1) at ({1.5/sqrt(2)},{1.5/sqrt(2)}) {};
            \node[point,label=right:$2$] (2) at (1.5,0) {};
            \node[point,label=below right:$3$] (3) at ({1.5/sqrt(2)},-{1.5/sqrt(2)}) {};
            \node[request,label=below:{$r=4$}] (4) at (0,-1.5) {};
            \node[point,label=below left:$5$] (5) at (-{1.5/sqrt(2)},-{1.5/sqrt(2)}) {};
            \node[point,label=left:$6$] (6) at (-1.5,0) {};
            \node[point,label=above left:$7$] (7) at ({-1.5/sqrt(2)},{1.5/sqrt(2)}) {};
            \draw[support] (5)--(6) node[below left,midway,red] {\footnotesize $8$};
            \draw[support] (5)--(0) node[left=-1pt,pos=.8,red] {\footnotesize $10$};
            \draw[support] (1)--(6) node[below,pos=.4,red] {\footnotesize $10$};
            \draw[support] (2)--(5) node[above,midway,red] {\footnotesize $9$};
            \draw[support] (4)--(6) node[right,near start,red] {\footnotesize $10$};
            \draw[support] (0)--(1) node[above right,midway,red] {\footnotesize $11$};
            \draw[support] (0)--(2) node[left,near end,red] {\footnotesize $11$};
            \draw[support] (2.1,0) arc (0:360:.3) node[above,pos=.25,red] {\footnotesize $12$};
            \node[point] at (2) {};
        \end{tikzpicture}
        \caption{}
        \label{fig:test-case-b}
    \end{subfigure}
    \begin{subfigure}{.3\textwidth}
        \centering
        \begin{tikzpicture}
            \node[point,label=below:$0$] (0) at (0,0) {};
            \node[point,label=below:$1$] (1) at ({-.618*sqrt(2)},{-.618*sqrt(2)}) {};
            \node[point,label=right:$2$] (2) at (2,0) {};
        	\node[point,label=below:$3$] (3) at ({2-.618*sqrt(2)},{-.618*sqrt(2)}) {};
        	\node[point,label=above:$4$] (4) at (0,2) {};
        	\node[point,label=left:$5$] (5) at ({-.618*sqrt(2)},{2-.618*sqrt(2)}) {};
        	\node[point,label=above:$6$] (6) at (2,2) {};
        	\node[request,label=right:$7$] (7) at ({2-.618*sqrt(2)},{2-.618*sqrt(2)}) {};
            \draw[metric] (0)--(1);
            \draw[metric] (0)--(2);
            \draw[metric] (0)--(4);
            \draw[metric] (1)--(3);
            \draw[metric] (1)--(5);
            \draw[metric] (2)--(3);
            \draw[metric] (2)--(6);
            \draw[metric] (3)--(7);
            \draw[metric] (4)--(5);
            \draw[metric] (4)--(6);
            \draw[metric] (5)--(7);
            \draw[metric] (6)--(7);
            \draw[support] (0) .. controls (1,0.3) .. (2) node[pos=.4,above,red] {$0$};
            \draw[support] (0) .. controls (-0.3, 1) .. (4) node[pos=.4,left,red] {$0$};
            \draw[support] (0) .. controls (-0.6, -0.3) .. (1) node[midway,above,red] {$0$};
        \end{tikzpicture}
        \caption{}
        \label{fig:test-case-c}
    \end{subfigure}
    \caption{Illustrative figures for the test cases used in Theorem~\ref{thm:canonical-limit}. In each test case, the red star denotes the last request; each dashed red arc $ab$, including the self-loop in the second test case, and its red label $v$ represent a support set $w(rab)=v$. The metric is the shortest path distance w.r.t.\ the solid black arcs.}
    \label{fig:test-case}
\end{figure}

\bibliographystyle{plainnat}
\bibliography{k-server}

\begin{thebibliography}{22}
\providecommand{\natexlab}[1]{#1}
\providecommand{\url}[1]{\texttt{#1}}
\expandafter\ifx\csname urlstyle\endcsname\relax
  \providecommand{\doi}[1]{doi: #1}\else
  \providecommand{\doi}{doi: \begingroup \urlstyle{rm}\Url}\fi

\bibitem[Argue et~al.(2021)Argue, Gupta, Tang, and
  Guruganesh]{ArgueGTG:JACM:2021}
CJ~Argue, Anupam Gupta, Ziye Tang, and Guru Guruganesh.
\newblock Chasing convex bodies with linear competitive ratio.
\newblock \emph{Journal of the ACM (JACM)}, 68\penalty0 (5):\penalty0 1--10,
  2021.

\bibitem[Bansal et~al.(2011)Bansal, Buchbinder, Madry, and
  Naor]{BansalBMN:FOCS:2011}
Nikhil Bansal, Niv Buchbinder, Aleksander Madry, and Joseph Naor.
\newblock A polylogarithmic-competitive algorithm for the $k$-server problem.
\newblock In \emph{Proceedings of the 52nd Annual IEEE Symposium on Foundations
  of Computer Science}, pages 267--276. IEEE, 2011.

\bibitem[Bansal et~al.(2012)Bansal, Buchbinder, and Naor]{BansalBN:JACM:2012}
Nikhil Bansal, Niv Buchbinder, and Joseph Naor.
\newblock A primal-dual randomized algorithm for weighted paging.
\newblock \emph{Journal of the ACM}, 59\penalty0 (4):\penalty0 1--24, 2012.

\bibitem[Bansal et~al.(2017)Bansal, Eliáš, and
  Koumoutsos]{BansalEK:FOCS:2017}
Nikhil Bansal, Marek Eliáš, and Grigorios Koumoutsos.
\newblock Weighted $k$-server bounds via combinatorial dichotomies.
\newblock In \emph{Proceedings of the 58th Annual IEEE Symposium on Foundations
  of Computer Science}, pages 493--504. IEEE, 2017.

\bibitem[Bartal(1996)]{Bartal:FOCS:1996}
Yair Bartal.
\newblock Probabilistic approximation of metric spaces and its algorithmic
  applications.
\newblock In \emph{Proceedings of 37th Annual IEEE Symposium on Foundations of
  Computer Science}, pages 184--193. IEEE, 1996.

\bibitem[Bartal(1998)]{Bartal:STOC:1998}
Yair Bartal.
\newblock On approximating arbitrary metrices by tree metrics.
\newblock In \emph{Proceedings of the 30th Annual ACM Symposium on Theory of
  Computing}, pages 161--168, 1998.

\bibitem[Bartal and Koutsoupias(2004)]{BartalK:TCS:2004}
Yair Bartal and Elias Koutsoupias.
\newblock On the competitive ratio of the work function algorithm for the
  $k$-server problem.
\newblock \emph{Theoretical Computer Science}, 324\penalty0 (2-3):\penalty0
  337--345, 2004.

\bibitem[Bein et~al.(2002)Bein, Chrobak, and Larmore]{BeinCL:TCS:2002}
Wolfgang~W Bein, Marek Chrobak, and Lawrence~L Larmore.
\newblock The $3$-server problem in the plane.
\newblock \emph{Theoretical Computer Science}, 289\penalty0 (1):\penalty0
  335--354, 2002.

\bibitem[Borodin et~al.(1992)Borodin, Linial, and Saks]{BorodinLS:JACM:1992}
Allan Borodin, Nathan Linial, and Michael~E Saks.
\newblock An optimal on-line algorithm for metrical task system.
\newblock \emph{Journal of the ACM}, 39\penalty0 (4):\penalty0 745--763, 1992.

\bibitem[Bubeck et~al.(2018)Bubeck, Cohen, Lee, Lee, and
  Mądry]{BubeckCLLM:STOC:2018}
Sébastien Bubeck, Michael~B Cohen, Yin~Tat Lee, James~R Lee, and Aleksander
  Mądry.
\newblock $k$-{S}erver via multiscale entropic regularization.
\newblock In \emph{Proceedings of the 49th Annual ACM Symposium on Theory of
  Computing}, pages 3--16, 2018.

\bibitem[Buchbinder et~al.(2019)Buchbinder, Gupta, Molinaro, and
  Naor]{BuchbinderGMN:SODA:2019}
Niv Buchbinder, Anupam Gupta, Marco Molinaro, and Joseph Naor.
\newblock $k$-servers with a smile: Online algorithms via projections.
\newblock In \emph{Proceedings of the 30th Annual ACM-SIAM Symposium on
  Discrete Algorithms}, pages 98--116. SIAM, 2019.

\bibitem[Chrobak and Larmore(1991{\natexlab{a}})]{ChrobakL:SICOMP:1991}
Marek Chrobak and Lawrence~L Larmore.
\newblock An optimal on-line algorithm for $k$ servers on trees.
\newblock \emph{SIAM Journal on Computing}, 20\penalty0 (1):\penalty0 144--148,
  1991{\natexlab{a}}.

\bibitem[Chrobak and Larmore(1991{\natexlab{b}})]{chrobak1991server}
Marek Chrobak and Lawrence~L Larmore.
\newblock The server problem and on-line games.
\newblock \emph{On-line algorithms}, 7:\penalty0 1--1, 1991{\natexlab{b}}.

\bibitem[Chrobak et~al.(1991)Chrobak, Karloof, Payne, and
  Vishwnathan]{ChrobakKPV:SIDMA:1991}
Marek Chrobak, H~Karloof, Tom Payne, and Sundar Vishwnathan.
\newblock New results on server problems.
\newblock \emph{SIAM Journal on Discrete Mathematics}, 4\penalty0 (2):\penalty0
  172--181, 1991.

\bibitem[Coester and Koutsoupias(2021)]{CoesterK:ICALP:2021}
Christian Coester and Elias Koutsoupias.
\newblock Towards the $k$-server conjecture: A unifying potential, pushing the
  frontier to the circle.
\newblock In \emph{Proceedings of the 48th International Colloquium on
  Automata, Languages, and Programming}, 2021.

\bibitem[Fakcharoenphol et~al.(2004)Fakcharoenphol, Rao, and
  Talwar]{FakcharoenpholRT:JCSS:2004}
Jittat Fakcharoenphol, Satish Rao, and Kunal Talwar.
\newblock A tight bound on approximating arbitrary metrics by tree metrics.
\newblock \emph{Journal of Computer and System Sciences}, 69\penalty0
  (3):\penalty0 485--497, 2004.

\bibitem[Koutsoupias(1999)]{Koutsoupias:FOCS:1999}
E.~Koutsoupias.
\newblock Weak adversaries for the $k$-server problem.
\newblock In \emph{Proceedings of the 40th Annual IEEE Symposium on Foundations
  of Computer Science}, pages 444--449, 1999.
\newblock \doi{10.1109/SFFCS.1999.814616}.

\bibitem[Koutsoupias(2009)]{koutsoupias2009k}
Elias Koutsoupias.
\newblock The $k$-server problem.
\newblock \emph{Computer Science Review}, 3\penalty0 (2):\penalty0 105--118,
  2009.

\bibitem[Koutsoupias and Papadimitriou(1995)]{KoutsoupiasP:JACM:1995}
Elias Koutsoupias and Christos~H Papadimitriou.
\newblock On the $k$-server conjecture.
\newblock \emph{Journal of the ACM}, 42\penalty0 (5):\penalty0 971--983, 1995.

\bibitem[Manasse et~al.(1988)Manasse, McGeoch, and
  Sleator]{ManasseMS:STOC:1998}
Mark Manasse, Lyle McGeoch, and Daniel Sleator.
\newblock Competitive algorithms for on-line problems.
\newblock In \emph{Proceedings of the 20th Annual ACM Symposium on Theory of
  Computing}, page 322–333, 1988.

\bibitem[Sellke(2020)]{Sellke:SODA:2020}
Mark Sellke.
\newblock Chasing convex bodies optimally.
\newblock In \emph{Proceedings of the 31st Annual ACM-SIAM Symposium on
  Discrete Algorithms}, pages 1509--1518. SIAM, 2020.

\bibitem[Sitters(2014)]{Sitters:SICOMP:2014}
René Sitters.
\newblock The generalized work function algorithm is competitive for the
  generalized $2$-server problem.
\newblock \emph{SIAM Journal on Computing}, 43\penalty0 (1):\penalty0 96--125,
  2014.

\end{thebibliography}

\appendix



%

%


\section{Proof of Lemma~\ref{lem:3-server}}
\label{app:proof-3-server}

Recall that $w_t(X)$ resolves from $x$ if $x \in X$ and $w_t(X) = w_t(X - x + r) + d(x, r)$. Let $\Phi_t = \Phi^{\star}(w_t, u, x, y, z) + \Delta$ for some $(x, y, z) \in P$.
In other words, points $x, y, z$ are not in a semi-circle. 

Similar to the proof sketch in the main text, we will drop the subscript $t$ and write $w_t$ as $w$, $r_t$ as $r$ for ease of notations.

\paragraph*{\texorpdfstring{Case 0: all of $w(ux\bar{y}), w(uy\bar{z}), w(uz\bar{x})$ resolve from $u$.}{Case 0}}

This case is already covered in the main body.
We restate it below to be self-contained.
\begin{align*}
    \Phi^{\star}(w, u, x, y, z) & = w(\bar{x}^3) + w(ux\bar{y}) + w(uy\bar{z}) + w(uz\bar{x}) \\
    & = w(\bar{x}^3) + w(ry\bar{z}) + w(rz\bar{w}) + w(rw\bar{y}) + 3d(r, u)
    \tag{Resolving} \\
    & \geq w(\bar{r}^3) + w(ry\bar{z}) + w(rz\bar{w}) + w(rw\bar{y})
    \tag{Lipschitzness} \\
    & = \Phi^{\star}(w, r, x, y, z)
    ~.
\end{align*}

\paragraph*{\texorpdfstring{Case 1: one of $w(ux\bar{y}), w(uy\bar{z}), w(uz\bar{x})$ resolves from $u$.}{Case 1}}

By the symmetry of the potential function, assume without loss of generality that $w(ux\bar{y})$ resolves from $u$. 

\subparagraph{Subcase 1a: $w(uy\bar{z})$ resolves from $y$ and $w(uz\bar{x})$ resolves from $\bar{x}$.}
This subcase is relatively simple as we only need the resolving and Lipschitzness steps, as we have covered in the main text.
In fact, \textbf{we will not use the assumption that $w(ux\bar{y})$ resolves from $u$}.
This feature will help simplify the analysis of Case 3.
\begin{align*}
    \Phi^{\star}(w, u, x, y, z) & = w(\bar{u}^3) + w(ux\bar{y}) + w(uy\bar{z}) + w(uz\bar{x}) \\
    & = w(r\bar{u}^2) + w(ux\bar{y}) + w(ru\bar{z}) + w(ruz) \\
    & \qquad
    + d(r, y) + d(r, \bar{x}) + d(r, \bar{u}) 
    \tag{Resolving} \\
    & \geq w(r\bar{u}^2) + w(\bar{r}^3) + w(ru\bar{z}) + w(ruz)  \tag{Lipschitzness} \\
    & = \Phi^{\star}(w, r, u, z, \bar{u})
    ~.
\end{align*}

\subparagraph*{Subcase 1b: $w(uy\bar{z})$ resolves from $\bar{z}$ and $w(uz\bar{x})$ resolves from $z$.}
First we apply the resolving step and get that:
\begin{align*}
    \Phi^{\star}(w, u, x, y, z) & = w(\bar{u}^3) + w(ux\bar{y}) + w(uy\bar{z}) + w(uz\bar{x}) \\
    & = w(r\bar{u}^2) + w(rx\bar{y}) + w(ruy) + w(ru\bar{x}) + 2\Delta
    ~.
    \tag{Resolving}
\end{align*}

If $u \in \wideparen{xy}$, then $(x, y, \bar{u}) \in P$.
We further have:
\begin{align*}
    \Phi^{\star}(w, u, x, y, z)
    & \geq w(r\bar{u}\bm{u}) + w(rx\bar{y}) + w(ruy) + w(r\bm{\bar{u}}\bar{x}) + 2\Delta 
    \tag{Quasi-convexity} \\
    & \geq w(\bar{r}^3) + w(rx\bar{y}) + w(ruy) + w(r\bar{u}\bar{x}) \tag{Lipschitzness} \\
    & = \Phi^{\star}(w, r, x, y, \bar{u})
    ~.
\end{align*}

If $u \in \wideparen{\bar{x}\bar{y}}$~~, then $(x, y, u) \in P$.
We further have:
\begin{align*}
    \Phi^{\star}(w, u, x, y, z)
    & \geq w(r\bar{u}\bm{u}) + w(rx\bar{y}) + w(r\bm{\bar{u}}y) + w(ru\bar{x}) + 2\Delta
    \tag{Quasi-convexity} \\
    & \geq w(\bar{r}^3) + w(rx\bar{y}) + w(r\bar{u}y) + w(ru\bar{x})
    \tag{Lipschitzness} \\
    & = \Phi^{\star}(w, r, x, y, u)
    ~.
\end{align*}

Next consider the remaining two cases, i.e., if $u \in \wideparen{x\bar{y}}$ of if $u \in \wideparen{y\bar{x}}$.
Apply quasi-convexity to $w(rx\bar{y})$ and $w(ruy)$.
If $w(rx\bar{y}) + w(ruy) \geq w(rx\bm{u}) + w(r\bm{\bar{y}}y)$, then:
\begin{align*}
    \Phi^{\star}(w, u, x, y, z)
    & \geq w(r\bar{u}^2) + w(rx\bm{u}) + w(r\bm{\bar{y}}y) + w(ru\bar{x}) + 2\Delta
    \tag{Quasi-convexity} \\
    & \geq w(r\bar{u}^2) + w(rxu) + w(\bar{r}^3) + w(ru\bar{x})
    \tag{Lipschitzness} \\
    & = \Phi^{\star}(w, r, u, x, \bar{u})
    ~.
\end{align*}

Otherwise, we have:
\begin{align*}
    \Phi^{\star}(w, u, x, y, z)
    & \geq w(r\bar{u}^2) + w(r\bm{u}\bar{y}) + w(r\bm{x}y) + w(ru\bar{x}) + 2\Delta
    ~.
    \tag{Quasi-convexity}
\end{align*}

If $u \in \wideparen{x\bar{y}}$, then $(u, y, \bar{x}) \in P$.
We further have:
\begin{align*}
    \Phi^{\star}(w, u, x, y, z)
    & \geq w(r\bar{u}\bm{u}) + w(ru\bar{y}) + w(rxy) + w(r\bm{\bar}{u}\bar{x}) + 2\Delta
    \tag{Quasi-convexity} \\
    & \geq w(\bar{r}^3) + w(ru\bar{y}) + w(rxy) + w(r\bar{u}\bar{x})
    \tag{Lipschitzness} \\
    & = \Phi^{\star}(w, r, u, y, \bar{x})
    ~.
\end{align*}

If $u \in \wideparen{y\bar{x}}$, then $(u, x, \bar{y}) \in P$.
We further have:
\begin{align*}
    \Phi^{\star}(w, u, x, y, z)
    & \geq w(r\bar{u}\bm{u}) + w(r\bm{\bar{u}}\bar{y}) + w(rxy) + w(ru\bar{x}) + 2\Delta
    \tag{Quasi-convexity} \\
    & \geq w(\bar{r}^3) + w(r\bar{u}\bar{y}) + w(rxy) + w(ru\bar{x})
    \tag{Lipschitzness} \\
    & = \Phi^{\star}(w, r, u, x, \bar{y})
    ~.
\end{align*}

\subparagraph{Subcase 1c: $w(uy\bar{z})$ resolves from $y$ and $w(uz\bar{x})$ resolves from $z$.}
Once we prove this subcase, the remaining subcase when $w(uy\bar{z})$ resolves from $\bar{z}$ and $w(uz\bar{x})$ resolves from $\bar{x}$ follows by symmetry.
Here we need to consider the location of $r$. 

If $r \in \wideparen{\bar{x}\bar{z}}$~~, then by $\bar{x} \in \wideparen{rz}$ we get that $w(uz\bar{x})$ can also resolve from $\bar{x}$.
Hence, it reduces to Subcase 1a. 

If $r \in \wideparen{\bar{y}\bar{z}}$~~, then by $\bar{z} \in \wideparen{ry}$ we get that $w(uy\bar{z})$ can also resolve from $\bar{z}$.
Hence, it reduces to Subcase 1b. 

Finally, consider $r \in \wideparen{\bar{x}\bar{y}}$~~.
We have that $(x, y, r) \in P$ and therefore $\Phi^{\star}(w, u, x, y, z) \geq \Phi^{\star}(w, u, x, y, r)$.
As a result, we may assume without loss of generality that $z = r$.
Next we move $y$ towards $r$, during which $\Phi^{\star}(w, u, x, y, r)$ will not increase.
We may move $y$ all the way to $\bar{x}$ while maintaining $(x, y, r) \in P$.
Hence, we may further assume without loss of generality that $y = \bar{x}$.
We then get that:
\begin{align*}
    \Phi^{\star}(w, u, x, \bar{x}, r) & = w(\bar{u}^3) + w(ux^2) + w(u\bar{x}\bar{r}) + w(ur\bar{x}) \\
    & = w(r\bar{u}^2) + w(rx^2) + w(ru\bar{r}) + w(ru\bar{x}) \\
    & \qquad
    + \Delta + d(r, \bar{x})
    \tag{Resolving} \\
    & \geq w(r\bar{u}^2) + w(rx\bm{u}) + w(r\bm{x}\bar{r}) + w(ru\bar{x}) \\
    & \qquad
    + \Delta + d(r, \bar{x})
    \tag{Quasi-convexity} \\
    & \geq w(r\bar{u}^2) + w(rxu) + w(\bar{r}^3) + w(ur\bar{x})
    \tag{Lipschitzness} \\
    & = \Phi^{\star}(w, r, u, x, \bar{u})
    ~.
\end{align*}

\paragraph*{\texorpdfstring{Case 2: two of $w(ux\bar{y}), w(uy\bar{z}), w(uz\bar{x})$ resolve from $u$.}{Case 2}}
Since the potential function is invariant to changing points $x, y, z$ to their antipodes \emph{and} changing the parity of their permutation, we may assume without loss of generality that the only configuration that does not resolve from $u$ must resolve from $x$, $y$, or $z$.
Further, points $\bar{x}, \bar{y}, \bar{z}$ partition the circle into three arcs.
By that the potential function and the first assumption are both invariant to permuting $x, y, z$ without changing parity, we will assume without loss of generality that $r \in \wideparen{\bar{y}\bar{z}}$.
We then have three subcases depending on which configuration does not resolve from $u$.

\subparagraph*{Subcase 2a: $w(ux\bar{y})$ resolves from $x$.}
By the assumption that $r \in \wideparen{\bar{y}\bar{z}}$~~, we have that $(r, y, z) \in P$.
Further by the assumption of the subcase that $w(ux\bar{y})$ resolves from $x$, we have $\Phi^{\star}(w, u, x, y, z) \geq \Phi^{\star}(w, u, r, y, z)$.
Therefore, we may assume without loss of generality that $x = r$. 
Then we apply the resolving step to get that:
\begin{align*}
    \Phi^{\star}(w, u, r, y, z) & = w(\bar{u}^3) + w(ur\bar{y}) + w(uy\bar{z}) + w(uz\bar{r}) \\
    & = w(r\bar{u}^2) + w(ru\bar{y}) + w(ry\bar{z}) + w(rz\bar{r}) + \Delta + d(r, u)
    ~.
    \tag{Resolving}
\end{align*}

The remaining argument depends on the relative location of $u$ w.r.t.\ $y, \bar{y}, z, \bar{z}$.
If $u \in \wideparen{\bar{y}\bar{z}}$~~, then $(u, y, z) \in P$.
We further have:
\begin{align*}
    \Phi^{\star}(w, u, x, y, z)
    & \geq w(r\bar{u}\bm{\bar{r}}) + w(ru\bar{y}) + w(ry\bar{z}) + w(rz\bm{\bar{u}}) \\
    & \qquad
    + \Delta + d(r, u)
    \tag{Quasi-convexity} \\
    & \geq w(\bar{r}^3) + w(ru\bar{y}) + w(ry\bar{z}) + w(rz\bar{u})
    \tag{Lipschitzness} \\
    & = \Phi^{\star}(w, r, u, y, z)
    ~.
\end{align*}

If $u \in \wideparen{z\bar{y}}$, then $(u ,y, \bar{z}) \in P$.
The argument of this case depends on the result of applying quasi-convexity to $w(ry\bar{z})$ and $w(rz\bar{r})$.
If $w(ry\bar{z}) + w(rz\bar{r}) \geq w(ry\bm{z}) + w(r\bm{\bar{z}}\bar{r})$, then:
\begin{align*}
    \Phi^{\star}(w, u, x, y, z)
    & \geq w(r\bar{u}^2) + w(ru\bar{y}) + w(ry\bm{z}) + w(r\bm{\bar{z}}\bar{r}) \\
    & \qquad 
    + \Delta + d(r, u)
    \tag{Quasi-convexity} \\
    & \geq w(r\bar{u}\bm{\bar{z}}) + w(ru\bar{y}) + w(ryz) + w(r\bm{\bar{u}}\bar{r}) \\
    & \qquad
    + \Delta + d(r, u)
    \tag{Quasi-convexity} \\
    & \geq w(r\bar{u}\bar{z}) + w(ru\bar{y}) + w(ryz) + w(\bar{r}^3)
    \tag{Lipschitzness} \\
    & = \Phi^{\star}(w, r, u, y, \bar{z})
    ~.
\end{align*}

Otherwise, we have:
\begin{align*}
    \Phi^{\star}(w, u, x, y, z)
    & \geq w(r\bar{u}^2) + w(ru\bar{y}) + w(ry\bm{\bar{r}}) + w(rz\bm{\bar{z}}) \\
    & \qquad
    + \Delta + d(r, u)
    ~.
    \tag{Quasi-convexity}
\end{align*}

Further, the assumptions that $u$ lies on arc from $z$ to $\bar{y}$ that $r$ lies on arc from $\bar{y}$ to $\bar{z}$ imply that $\bar{y} \in \wideparen{ru}$.
Therefore:
\begin{align*}
    \Phi^{\star}(w, u, x, y, z)
    & = w(r\bar{u}^2) + w(ru\bar{y}) + w(ry\bar{r}) + w(rz\bar{z}) \\
    & \qquad
    + \Delta + d(r, \bar{y}) + d(\bar{y}, u)
    \tag{$\bar{y} \in \wideparen{ru}$} \\
    & \geq w(r\bar{u}^2) + w(ru^2) + w(\bar{r}^3) + w(rz\bar{z})
    \tag{Lipschitzness} \\
    & \geq w(r\bar{u}^2) + w(ru\bm{\bar{z}}) + w(\bar{r}^3) + w(rz\bm{u})
    \tag{Quasi-convexity} \\
    & = \Phi^{\star}(w, r, u, z, \bar{u})
    ~.
\end{align*}

If $u \in \wideparen{y\bar{z}}$, then $(u, \bar{y}, z) \in P$.
The argument of this case depends on the result of applying quasi-convexity to $w(ru\bar{y})$ and $w_t(ry\bar{z})$.
If $w(ru\bar{y}) + w(ry\bar{z}) \geq w(ru\bm{y}) + w(r\bm{\bar{y}}\bar{z})$, then:
\begin{align*}
    \Phi^{\star}(w, u, x, y, z) 
    & \geq w(r\bar{u}^2) + w(ru\bm{y}) + w(r\bm{\bar{y}}\bar{z}) + w(rz\bar{r}) \\
    & \qquad
    + \Delta + d(r, u) 
    \tag{Quasi-convexity} \\
    & \geq w(r\bm{z}\bar{u}) + w(ruy) + w(r\bar{y}\bar{z}) + w(r\bm{\bar{u}}\bar{r}) \\
    & \qquad
    + \Delta + d(r, u)
    \tag{Quasi-convexity} \\
    & \geq w(rz\bar{u}) + w(ruy) + w(r\bar{y}\bar{z}) + w(\bar{r}^3)
    \tag{Lipschitzness} \\
    & = \Phi^{\star}(w, r, u, \bar{y}, z)
    ~.
\end{align*}

Otherwise, we have $w(ru\bar{y}) + w(ry\bar{z}) \geq w(ru\bm{\bar{z}}) + w(ry\bm{\bar{y}})$.
Further by the assumptions that $u$ lies on the arc from $y$ to $\bar{z}$ and that $r$ lies on the arc from $\bar{y}$ to $\bar{z}$, we have $\bar{z} \in \wideparen{ru}$.
Therefore, we get that:
\begin{align*}
    \Phi^{\star}(w, u, x, y, z)
    & \geq w(r\bar{u}^2) + w(ru\bm{\bar{z}}) + w(ry\bm{\bar{y}}) + w(rz\bar{r}) \\
    & \qquad
    + \Delta + d(r, u)
    \tag{Quasi-convexity} \\
    & = w(r\bar{u}^2) + w(ru\bar{z}) + w(ry\bar{y}) + w(rz\bar{r}) \\
    & \qquad
    + \Delta + d(r, \bar{z}) + d(\bar{z}, u) 
    \tag{$\bar{z} \in \wideparen{ru}$} \\
    & \geq w(r\bar{u}^2) + w(ru^2) + w(ry\bar{y}) + w(\bar{r}^3)
    \tag{Lipschitzness} \\
    & \geq w(r\bar{u}^2) + w(ru\bm{y}) + w(r\bm{u}\bar{y}) + w(\bar{r}^3)
    \tag{Quasi-convexity} \\
    & = \Phi^{\star}(w, r, u, y, \bar{u})
    ~.
\end{align*}

Finally, consider $u \in \wideparen{yz}$.
Since $r \in \wideparen{\bar{y}\bar{z}}$~~, we have $\bar{r} \in \wideparen{yz}$.
In other words, the antipode of the request $\bar{r}$ partitions arc $\wideparen{yz}$ into two sub-arcs $\wideparen{y\bar{r}}$ and $\wideparen{z\bar{r}}$.
If $u \in \wideparen{y\bar{r}}$, then $w(uy\bar{z})$ can also resolve from $y$ because moving from $u$ to $r$ would reach $y$ midway. 
It then reduces to Subcase 1c.
If $u \in \wideparen{z\bar{r}}$, then $w(uz\bar{r})$ can also resolve from $z$ becuase moving from $u$ to $r$ would reach $z$ midway.
It then reduces to Subcase 1c.

\subparagraph*{Subcase 2b: $w(uy\bar{z})$ resolves from $y$.}
By the assumption of the subcase, moving $y$ towards $r$ would not increase $\Phi^{\star}(w_t, u, x, y, z)$.
We can move $y$ all the way to $\bar{z}$ while maintaining $(x, y, z) \in P$.
Hence we may assume without loss of generality that $y = \bar{z}$.
Then we apply the resolving step and have:
\begin{align*}
    \Phi^{\star}(w, u, x, y, z)
    & = w(r\bar{u}^2) + w(rxz) + w(ru\bar{z}) + w(rz\bar{x}) \\
    & \qquad
    + \Delta + d(r, u) + d(r, \bar{z})
    ~.
    \tag{Resolving}
\end{align*}

If $u \in \wideparen{\bar{x}\bar{z}}$~~, then $(u, z, x) \in P$.
We further have:
\begin{align*}
    \Phi^{\star}(w, u, x, y, z)
    & \geq w(r\bar{u}\bm{z}) + w(rx\bm{\bar{u}}) + w(ru\bar{z}) + w(rz\bar{x}) \\
    & \qquad + \Delta + d(r, u) + d(r, \bar{z})
    \tag{Quasi-convexity} \\
    & \geq w(\bar{r}^3) + w(rx\bar{u}) + w(ru\bar{z}) + w(rz\bar{x})
    \tag{Lipschitznes} \\
    & = \Phi^{\star}(w, r, u, z, x)
    ~.
\end{align*}

If $u \in \wideparen{x\bar{z}}$, then $(u, z, \bar{x}) \in P$.
We further have:
\begin{align*}
    \Phi^{\star}(w, u, x, y, z)
    & \geq w(r\bar{u}\bar{z}) + w(rxz) + w(ru\bar{z}) + w(r\bar{\bar{u}}\bar{x}) \\
    & \qquad
    + \Delta + d(r, u) + d(r, \bar{z})
    \tag{Quasi-convexity} \\
    & \geq w(\bar{r}^3) + w(rxz) + w(ru\bar{z}) + w(r\bar{u}\bar{x})
    \tag{Lipschitzness} \\
    & = \Phi^{\star}(w, r, u, z, \bar{x})
    ~.
\end{align*}

If $x \in \wideparen{xz}$, then $(u, \bar{x}, \bar{z}) \in P$.
The argument depends on the result of applying quasi-convexity to $w(rxz)$ and $w(ru\bar{z})$.
If $w(rxz) + w(ru\bar{z}) \geq w(rx\bm{u}) + w(r\bm{z}\bar{z})$, we have:
\begin{align*}
    \Phi^{\star}(w, u, x, y, z)
    & \geq w(r\bar{u}^2) + w(rx\bm{u}) + w(r\bm{z}\bar{z}) + w(rz\bar{x}) \\
    & \qquad + \Delta + d(r, u) + d(r, \bar{z})
    \tag{Quasi-convexity} \\
    & \geq w(r\bar{u}\bm{z}) + w(rxu) + w(r\bm{\bar{u}}\bar{z}) + w(rz\bar{x}) \\
    & \qquad + \Delta + d(r, u) + d(r, \bar{z})
    \tag{Quasi-convexity} \\
    & \geq w(\bar{r}^3) + w(rxu) + w(r\bar{u}\bar{z}) + w(rz\bar{x})
    \tag{Lipschitzness} \\
    & = \Phi^{\star}(w, r, u, \bar{x}, \bar{z})
    ~.
\end{align*}

Otherwise, we have $w(rxz) + w(ru\bar{z}) \geq w(rx\bm{\bar{z}}) + w(ru\bm{z})$.
We then get that:
\begin{align*}
    \Phi^{\star}(w, u, x, y, z)
    & \geq w(r\bar{u}^2) + w(rx\bm{\bar{z}}) + w(ru\bm{z}) + w(rz\bar{x}) \\
    & \qquad
    + \Delta + d(r, u) + d(r, \bar{z})
    \tag{Quasi-convexity} \\
    & \geq w(r\bar{u}\bm{z}) + w(rx\bar{z}) + w(ruz) + w(r\bm{\bar{u}}\bar{x}) \\
    & \qquad
    + \Delta + d(r, u) + d(r, \bar{z})
    \tag{Quasi-convexity} \\
    & \geq w(\bar{r}^3) + w(rx\bar{z}) + w(ruz) + w(r\bar{u}\bar{x})
    \tag{Lipschitzness} \\
    & = \Phi^{\star}(w, r, u, \bar{z}, \bar{x})
    ~.
\end{align*}

Finally, if $x \in \wideparen{z\bar{x}}$, then $(u, x, \bar{z}) \in P$.
Again, the argument depends on the result of applying quasi-convexity, to $w(ru\bar{z})$ and $w(rz\bar{x})$.
If $w(ru\bar{z}) + w(rz\bar{x}) \geq w(ru\bm{\bar{x}}) + w(rz\bm{\bar{z}})$, we have:
\begin{align*}
    \Phi^{\star}(w, u, x, y, z)
    & \geq w(r\bar{u}^2) + w(rxz) + w(ru\bm{\bar{x}}) + w(rz\bm{\bar{z}}) \\
    & \qquad
    + \Delta + d(r, u) + d(r, \bar{z})
    \tag{Quasi-convexity} \\
    & \geq w(r\bar{u}\bm{z}) + w(rxz) + w(ru\bar{x}) + w(r\bm{\bar{u}}\bar{z}) \\
    & \qquad
    + \Delta + d(r, u) + d(r, \bar{z})
    \tag{Quasi-convexity} \\
    & \geq w(\bar{r}^3) + w(rxz) + w(ru\bar{x}) + w(r\bar{u}\bar{z})
    \tag{Lipschitzness} \\
    & = \Phi^{\star}(w, r, u, x, \bar{z})
    ~.
\end{align*}

Otherwise, we have $w(ru\bar{z}) + w(rz\bar{x}) \geq w(ru\bm{z}) + w(r\bar{x}\bm{\bar{z}})$.
We then get that:
\begin{align*}
    \Phi^{\star}(w, u, x, y, z)
    & \geq w(r\bar{u}^2) + w(rxz) + w(ru\bm{z}) + w(r\bar{x}\bm{\bar{z}}) \\
    & \qquad
    + \Delta + d(r, u) + d(r, \bar{z})
    \tag{Quasi-convexity} \\
    & \geq w(r\bar{u}\bm{z}) + w(rx\bm{\bar{u}}) + w(ruz) + w(r\bar{x}\bar{z}) \\
    & \qquad + \Delta + d(r, u) + d(r, \bar{z})
    \tag{Quasi-convexity} \\
    & \geq w(\bar{r}^3) + w(rx\bar{u}) + w(ruz) + w(r\bar{x}\bar{z})
    \tag{Lipschitzness} \\
    & = \Phi^{\star}(w, r, u, \bar{z}, x)
    ~.
\end{align*}

\subparagraph*{Subcase 2c: $w(uz\bar{x})$ resolves form $z$.}
By the assumption of the subcase, moving $z$ towards $r$ would not increase $\Phi^{\star}(w_t, r, x, y, z)$.
We can move $z$ until it reaches $\bar{y}$ while maintining $(x, y, z) \in P$.
Hence, we may assume without loss of generality that $z = \bar{y}$.
We then apply the resolving step to have have:
\begin{align*}
    \Phi^{\star}(w, u, x, y, \bar{y}) & = w(\bar{u}^3) + w(ux\bar{y}) + w(uy^2) + w(u\bar{x}\bar{y}) \\
    & = w(r\bar{u}^2) + w(rx\bar{y}) + w(ry^2) + w(ru\bar{x}) \\
    & \qquad 
    + \Delta + d(r, u) + d(r, \bar{y})
    ~.
    \tag{Resolving}
\end{align*}

If $u \in \wideparen{\bar{x}\bar{y}}$~~, then $(u, x, y) \in P$.
We further have:
\begin{align*}
    \Phi^{\star}(w, u, x, y, z)
    & \geq w(r\bar{u}\bm{y}) + w(rx\bar{y}) + w(ry\bm{\bar{u}}) + w(ru\bar{x}) \\
    & \qquad + \Delta + d(r, u) + d(r, \bar{y})
    \tag{Quasi-convexity} \\
    & \geq w(\bar{r}^3) + w(rx\bar{y}) + w(ry\bar{u}) + w(ru\bar{x})
    \tag{Lipschitzness} \\
    & = \Phi^{\star}(w, r, u, x, y)
    ~.
\end{align*}

If $u \in \wideparen{x\bar{y}}$, then $(u, \bar{x}, y) \in P$.
The argumenet depends on the result of applying quasi-convexity to $w(rx\bar{y})$ and $w(ru\bar{x})$.
If $w(rx\bar{y}) + w(ru\bar{x}) \geq w_t(rx\bm{u}) + w(r\bm{\bar{y}}\bar{x})$, then:
\begin{align*}
    \Phi^{\star}(w, u, x, y, z)
    & \geq w(r\bar{u}^2) + w(rx\bm{u}) + w(ry^2) + w(r\bar{y}\bm{\bar{x}}) \\
    & \qquad
    + \Delta + d(r, u) + d(r, \bar{y})
    \tag{Quasi-convexity} \\
    & \geq w(r\bar{u}\bm{y}) + w(rxu) + w(ry\bm{\bar{u}}) + w_t(r\bar{y}\bm{\bar{x}}) \\
    & \qquad
    + \Delta + d(r, u) + d(r, \bar{y})
    \tag{Quasi-convexity} \\
    & \geq w(\bar{r}^3) + w(rxu) + w(ry\bar{u}) + w(r\bar{y}\bar{x})
    \tag{Lipschitzness} \\
    & = \Phi^{\star}(w, r, u, \bar{x}, y)
    ~.
\end{align*}

Otherwise, we have $w(rx\bar{y}) + w(ru\bar{x}) \geq w(rx\bm{\bar{x}}) + w(ru\bm{\bar{y}})$.
We then get that:
\begin{align*}
    \Phi^{\star}(w, u, x, y, z)
    & \geq w(r\bar{u}^2) + w(rx\bm{\bar{x})} + w(ry^2) + w(ru\bm{\bar{y}}) \\
    & \qquad 
    + \Delta + d(r, u) + d(r, \bar{y})
    \tag{Quasi-convexity} \\
    & \geq w(r\bar{u}^2) + w(rx\bm{y}) + w(ry\bm{\bar{x}}) + w(ru\bar{y})
    + \Delta + d(r, u) + d(r, \bar{y})
    \tag{Quasi-convexity} \\
    & \geq w(r\bar{u}\bm{y}) + w(rxy) + w(r\bm{\bar{u}}\bar{x}) + w(ru\bar{y})
    + \Delta + d(r, u) + d(r, \bar{y})
    \tag{Quasi-convexity} \\
    & \geq w(\bar{r}^3) + w(rxy) + w(r\bar{u}\bar{x}) + w(ru\bar{y})
    \tag{Lipschitzness} \\
    & = \Phi^{\star}(w, r, u, y, \bar{x})
    ~.
\end{align*}

If $u \in \wideparen{y\bar{x}}$, then $(u, x, \bar{y}) \in P$.
We further have:
\begin{align*}
    \Phi^{\star}(w, u, x, y, z)
    & \geq w(r\bar{u}^2) + wt(rx\bm{y}) + w(ry\bm{\bar{y}}) + w(ru\bar{x}) \\
    & \qquad + \Delta + d(r, u) + d(r, \bar{y})
    \tag{Quasi-convexity} \\
    & \geq w(r\bar{u}\bm{y}) + w(rxy) + w(r\bm{\bar{u}}\bar{y}) + w(ru\bar{x}) \\
    & \qquad + \Delta + d(r, u) + d(r, \bar{y})
    \tag{Quasi-convexity} \\
    & \geq w(\bar{r}^3) + w(rxy) + w(r\bar{u}\bar{y}) + w(ru\bar{x})
    \tag{Lipschitzness} \\
    & = \Phi^{\star}(w, r, u, x, \bar{y})
    ~.
\end{align*}

Finally, if $u \in \wideparen{xy}$, then $(u, \bar{y}, \bar{x}) \in P$.
We futher have:
\begin{align*}
    \Phi^{\star}(w, u, x, y, z)
    & \geq w(r\bar{u}^2) + w(rx\bar{y}) + w(ry\bm{\bar{x}}) + w(ru\bm{y}) \\
    & \qquad
    + \Delta + d(r, u) + d(r, \bar{y})
    \tag{Quasi-convexity} \\
    & \geq w(r\bar{u}\bm{y}) + w(rx\bar{y}) + w(r\bm{\bar{u}}\bar{x}) + w(ruy) \\
    & \qquad
    + \Delta + d(r, u) + d(r, \bar{y})
    \tag{Quasi-convexity} \\
    & \geq w(\bar{r}^3) + w(rx\bar{y}) + w(r\bar{u}\bar{x}) + w(ruy)
    \tag{Lipschitzness} \\
    & = \Phi^{\star}(w, r, u, \bar{y}, \bar{x})
    ~.
\end{align*}

\paragraph*{Case 3: none of $w(ux\bar{y}), w(uy\bar{z}), w(uz\bar{x})$ resolve from $u$.}

Suppose that at least one of them resolves from $x$, $y$, or $z$, and at least one of them resolves from $\bar{x}$, $\bar{y}$, or $\bar{z}$.
Then, there will be two configurations that resolve from points other than $u$ and the pair antipodes in the two configurations.
This reduces to Subcase 1a, whose argument does not use the main assumption of Case 1 as remarked earlier.

Then, by symmetry, we may assume without loss of generality that $w(ux\bar{y}), w(uy\bar{z}), w(uz\bar{x})$ resolve from $x, y, z$ respectively.
Further, this assumption and the potential function are both invariant to permuting $x, y, z$ without changing parity.
We may therefore assume without loss of generality that request $r$ is on arc $\bar{y}\bar{z}$.
We then conclude that $(r, y, z) \in P$, and further assume without loss of generality that $x = r$ because:
\begin{align*}
    \Phi^{\star}(w, u, x, y, z) & = w(\bar{u}^3) + w(ux\bar{y}) + w(uy\bar{z}) + w(uz\bar{x}) \\
    & = w(\bar{u}^3) + w(ur\bar{y}) + w(uy\bar{z}) + w(uz\bar{x}) + d(r, x)
    \tag{Resolving} \\
    & \geq w_t(\bar{u}^3) + w_t(ur\bar{y}) + w_t(uy\bar{z}) + w_t(uz\bar{r})
    \tag{Lipschitzness} \\
    & = \Phi^{\star}(w, u, r, y, z)
    ~.
\end{align*}

With $x = r$, we now have $\bar{z} \in \wideparen{xy} = \wideparen{ry}$.
Hence, $w(uy\bar{z})$ can also resolve from $\bar{z}$.
It then reduces to Subcase 1a.

\section{Canonical Forms of Existing Potential Functions}
\label{app:potential}

\subsection{On Restricting \texorpdfstring{$u = r_t$}{u=r} Explicitly}

Some potential functions in the literature (e.g., Examples~\ref{exa:chrobak-larmore} and \ref{exa:BeinCL}) explicitly restrict $u = r_t$, where recall that $r_t$ is the last request of the work function $w_t$ under consideration.
In other words, instead of defining $\Phi_t = \min_{u, X} \Phi(w_t, u, X)$, they instead let the potential value be:
\begin{equation}
    \label{eqn:r-canonical-potential}
    \Phi_t = \min_{X} \Phi(w_t, r_t, X)
    ~.
\end{equation}

We show that they are equivalent to the canonical form, as long as the difference between two consecutive potential values upper bounds the extended cost as in the canonical argument.

\begin{lemma}
    Suppose that a canonical potential function $\Phi$ in the form of Eqn.~\eqref{eqn:canonical-potential} ensures that for any $k$-server instance and any $t$, the potential values in Eqn.~\eqref{eqn:r-canonical-potential} satisfy:
    \[
        \Phi_t - \Phi_{t-1} \geq w_t(\bar{r}_t^k) - w_{t-1}(\bar{r}_t^k)
        ~,
    \]
    then we also have:
    \[
        \Phi_t = \min_{u, X} \Phi(w_t, u, X)
        ~.
    \]
\end{lemma}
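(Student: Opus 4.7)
The goal is to establish $\Phi_t = \min_{u,X} \Phi(w_t, u, X)$. One direction, $\min_{u,X} \Phi(w_t, u, X) \le \Phi_t$, is immediate from the definition of $\Phi_t$ in Eqn.~\eqref{eqn:r-canonical-potential} as an optimisation over the restricted domain $\{r_t\} \times \pointset^{k(k-1)}$. So my plan reduces to proving the reverse inequality: for an arbitrary $u \in \pointset$ and arbitrary $X$, show that $\Phi(w_t, u, X) \ge \Phi_t$.

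The trick I would use for this harder direction is to \emph{extend} the given $k$-server instance by appending a single extra request $r_{t+1} := u$ at the next time step, and then apply the hypothesis at time $t+1$ of the extended instance. Since the hypothesis is assumed to hold for \emph{every} $k$-server instance, it gives
\[
    \Phi_{t+1} - \Phi_t \ge w_{t+1}(\bar{u}^k) - w_t(\bar{u}^k),
\]
where $\Phi_{t+1} = \min_{Y} \Phi(w_{t+1}, u, Y)$ because $r_{t+1} = u$ is now the last request.

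The next step is to relate $w_{t+1}$ and $w_t$ via the monotonicity property in Lemma~\ref{lem:work-function}, which guarantees $w_{t+1}(Y) = w_t(Y)$ for every configuration $Y$ with $u \in Y$. In particular, $w_{t+1}(u X_i) = w_t(u X_i)$ for every $i \in [k]$, while only the $w(\bar{u}^k)$ summand of $\Phi(\cdot, u, X)$ is sensitive to the change from $w_t$ to $w_{t+1}$; the distance terms in Eqn.~\eqref{eqn:canonical-potential} are independent of the work function. Hence
\[
    \Phi(w_{t+1}, u, X) - \Phi(w_t, u, X) = w_{t+1}(\bar{u}^k) - w_t(\bar{u}^k).
\]
Putting these ingredients together, $\Phi_{t+1} \le \Phi(w_{t+1}, u, X) = \Phi(w_t, u, X) + \bigl(w_{t+1}(\bar{u}^k) - w_t(\bar{u}^k)\bigr)$, while the hypothesis rearranges to $\Phi_t + \bigl(w_{t+1}(\bar{u}^k) - w_t(\bar{u}^k)\bigr) \le \Phi_{t+1}$. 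Chaining the two yields $\Phi_t \le \Phi(w_t, u, X)$, as required.

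The main conceptual step — and the only point that requires any insight — is the choice of extension: by artificially requesting $u$, the hypothesis's mandatory substitution $u \leftarrow r_{t+1}$ at time $t+1$ simulates the unconstrained evaluation $\Phi(w_t, u, X)$. Once this is seen, the rest is bookkeeping: the work function values that appear in $\Phi(w_{t+1}, u, X)$ agree with those in $\Phi(w_t, u, X)$ everywhere except on the configuration $\bar{u}^k$, and this single exception exactly cancels against the lower bound supplied by the hypothesis. No other properties of work functions (quasi-convexity, duality, Lipschitzness) enter the argument.
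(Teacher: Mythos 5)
Your proposal is correct and uses essentially the same argument as the paper: append an artificial request $r_{t+1}=u$, use the equality case of monotonicity to see that only the $w(\bar u^k)$ term of the potential changes, and invoke the hypothesis at time $t+1$. The paper phrases this as a contradiction starting from the global minimizer $(u^*,X^*)$, whereas you run the same chain of inequalities directly for an arbitrary $(u,X)$; this is only a cosmetic difference.
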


\begin{proof}
    Suppose for contrary that there is a $k$-server instance and some round $t$ for which the lemma does not hold.
    Let $u^*, X^*$ be the minimizer of $\Phi(w_t, u, X)$.
    We have:
    \[
        \Phi(w_t, u^*, X^*) < \Phi_t = \min_X \Phi(w_t, r_t, X)
        ~.
    \]
    
    Consider a new instance by letting $r_{t+1} = u^*$.
    By the form of canonical potential function in Eqn.~\eqref{eqn:canonical-potential}, we have:
    \[
        \Phi(w_{t+1}, r_{t+1}, X)
        = w_{t+1}(\bar{r}_{t+1}^3) + \sum_{i=1}^k w_{t+1}(r_{t+1} X_i) - \sum_{\{(i,j),(i',j')\} \in P} d(x_{i,j}, x_{i',j'})
        ~.
    \]
    
    Further by the equality case of monotonicity (Lemma~\ref{lem:work-function}), it can be written as:
    \begin{align*}
        \Phi(w_{t+1}, r_{t+1}, X)
        &
        = w_{t+1}(\bar{r}_{t+1}^3) + \sum_{i=1}^k w_t(r_t X_i) - \sum_{\{(i,j),(i',j')\} \in P} d(x_{i,j}, x_{i',j'}) \\
        &
        = w_{t+1}(\bar{r}_{t+1}^3) - w_t(\bar{r}_{t+1}^3) + \Phi(w_t, r_{t+1}, X)
        ~.
    \end{align*}
    
    Hence, we get that:
    \begin{align*}
        \Phi_{t+1}
        &
        = \min_X \Phi(w_{t+1}, r_{t+1}, X) \\
        &
        = w_{t+1}(\bar{r}_{t+1}^3) - w_t(\bar{r}_{t+1}^3) + \min_X \Phi(w_t, r_{t+1}, X) \\
        &
        = w_{t+1}(\bar{r}_{t+1}^3) - w_t(\bar{r}_{t+1}^3) + \Phi(w_t, r_{t+1}, X^*)
        ~.
        \tag{$r_{t+1} = u^*$}
    \end{align*}
    
    Combining it with the assumption for contrary, we get that:
    \[
        \Phi_{t+1} - \Phi_t < \Phi_{t+1} - \Phi(w_t, r_{t+1}, X^*) = w_{t+1}(\bar{r}_{t+1}^k) - w_t(\bar{r}_{t+1}^k)
        ~,
    \]
    contracting the lemma assumption.
\end{proof}

\subsection{Equivalence of Existing Potentials and Their Canonical Forms}
\label{app:equivalency}

We first give some useful lemmas for enforcing antipode constraints.

\begin{lemma}[Antipodes in Potentials]
    \label{lem:antipodes-in-potentials}
    For any $1$-Lipschitz function $f: \pointset^{2m} \to R$, where $\pointset$ is a metric with antipodes, we have:
    \[
        \min_{X, Y \in \pointset^m} f(X, Y) - \sum_{x \in X}\sum_{y \in Y}d(x, y) = \min_{u \in \pointset}f(u^m, \bar{u}^m) - m^2\Delta
    \]
\end{lemma}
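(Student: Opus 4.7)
The plan is to prove the two inequalities separately. The direction in which the left-hand side is at most the right-hand side follows by plugging $X = u^m$ and $Y = \bar{u}^m$ into the left-hand side: every term $d(u, \bar{u})$ equals $\Delta$, so $\sum_{x \in X}\sum_{y \in Y} d(x, y) = m^2\Delta$, yielding $f(u^m, \bar{u}^m) - m^2\Delta$; minimizing over $u$ gives the inequality.

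For the reverse direction, I plan to show that for every $(X, Y)$ there is some $u \in \pointset$ with $f(u^m, \bar{u}^m) - m^2\Delta \le f(X, Y) - \sum_{x, y} d(x, y)$. Using the antipode identity $d(x, y) + d(x, \bar{y}) = \Delta$, this is equivalent to $f(u^m, \bar{u}^m) - f(X, Y) \le A$, where $A := \sum_{x \in X}\sum_{y \in Y} d(x, \bar{y})$. By $1$-Lipschitzness of $f$ in the Wasserstein metric on $\pointset^{2m}$, the left-hand side is bounded by the cost of the natural bijection pairing the $m$ copies of $u$ with the points of $X$ and the $m$ copies of $\bar{u}$ with the points of $Y$, namely $\sum_{x} d(u, x) + \sum_{y} d(\bar{u}, y) = \sum_{x} d(u, x) + \sum_{y} d(u, \bar{y})$, using the consequence $d(\bar{u}, y) = d(u, \bar{y})$ of the antipode axiom. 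It thus suffices to exhibit some $u$ with $\sum_{x} d(u, x) + \sum_{y} d(u, \bar{y}) \le A$.

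The central step is to pick $u$ by averaging over the points of $X$. Writing $D_X := \sum_{x, x' \in X} d(x, x')$ for the total pairwise distance within $X$, the average of $\sum_{x} d(u, x) + \sum_{y} d(u, \bar{y})$ as $u$ ranges over $X$ equals $(D_X + A)/m$, so it suffices to show $D_X \le (m-1) A$. I plan to establish this from the triangle inequality $d(x, x') \le d(x, \bar{y}) + d(x', \bar{y})$ applied for each $y \in Y$: averaging over $y$ and summing over the $m(m-1)$ ordered pairs $(x, x')$ with $x \ne x'$ (self-pairs contribute nothing to $D_X$) yields $D_X \le \tfrac{2(m-1)}{m}\, A \le (m-1)\, A$. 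Hence some $u \in X$ achieves the required bound, completing the argument.

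The main obstacle I anticipate is precisely the sharp form of the bound on $D_X$: summing the triangle inequality over all $m^2$ ordered pairs, including $x = x'$, yields only the weaker $D_X \le 2A$, which fails the averaging test at $m = 2$. Restricting the summation to the $m(m-1)$ distinct ordered pairs sharpens the constant to $\tfrac{2(m-1)}{m}$, which is exactly $m-1$ when $m = 2$ and strictly smaller for $m \ge 3$. This small but essential sharpening is what makes the averaging argument go through uniformly in $m$, and it is the only nontrivial step of the proof.
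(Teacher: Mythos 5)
Your proposal is correct and follows essentially the same route as the paper: both directions are handled identically, and the key step of choosing $u$ by averaging over the points of $X$, bounding the Lipschitz cost by $D_X + A$, and sharpening $D_X \le \tfrac{2(m-1)}{m} A \le (m-1)A$ via the triangle inequality through $\bar{y}$ restricted to distinct pairs is exactly the paper's argument (the paper phrases the averaging as the inequality $\sum_{u \in X}\bigl(f(u^m,\bar{u}^m) - f(X,Y)\bigr) \le m\,A$, which is the same computation).
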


\begin{proof}
    Note that the right-hand-side is a more restrictive minimization, requiring all $x \in X$ be the same point, all $y \in Y$ be the same point, and the two points are antipodes.
    Hence, the left-hand-side is less than or equal to the right-hand-side.
    
    Next we establish the other direction.
    In other words, for any $X, Y \in \pointset^m$ we shall find $u \in \pointset$ such that:
    \[
        f(u^m, \bar{u}^m) - f(X,Y) \le m^2\Delta - \sum_{x \in X} \sum_{y \in Y} = \sum_{x \in X} \sum_{y \in Y} d(x, \bar{y})
        ~.
    \]
    
    In fact, we will show that letting $u$ be one of the points in $X$ suffices, by proving that:
    \[
        \sum_{u \in X} \Big( f(u^m, \bar{u}^m) - f(X,Y) \Big) \le m \sum_{x \in X} \sum_{y \in Y} d(x, \bar{y})
        ~.
    \]
    
    By the $1$-Lipchitzness of $f$, we have:
    \[
        \sum_{u \in X} \Big( f(u^m, \bar{u}^m) - f(X, Y) \Big) \leq \sum_{x \in X} \sum_{x' \in X: x' \ne x}d(x, x') + \sum_{x \in X}\sum_{y \in Y}d(x, \bar{y})
        ~.
    \]
    
    It remains to bound the first summation on the right by $(m-1)$ times the second summation.
    Further any $y \in Y$ we have:
    \begin{align*}
        \sum_{x \in X} \sum_{x' \in X: x' \ne x} d(x, x')
        &
        \leq \sum_{x \in X} \sum_{x' \in X: x' \neq x} \Big( d(x, \bar{y}) + d(x', \bar{y}) \Big) \\
        &
        = 2(m - 1) \sum_{x \in X} d(x, \bar{y})
        ~.
    \end{align*}
    
    Averaging over $y \in Y$, we have:
    \[
        \sum_{x \in X} \sum_{x' \in X: x' \ne x} d(x, x') \leq \frac{2(m - 1)}{m} \sum_{x \in X} \sum_{y \in Y} d(x, \bar{y}) \le (m-1) \sum_{x \in X} \sum_{y \in Y} d(x, \bar{y})
        ~.
    \]
\end{proof}

\begin{corollary}
    \label{cor:antipodes-in-potentials}
    For any $1$-Lipschitz function $f: (\pointset^{m_1}, \pointset^{m_1}, \dots, \pointset^{m_\ell}, \pointset^{m_\ell}) \to R$, where $\pointset$ is a metric with antipodes, we have:
    \begin{align*}
        &
        \min_{X_i, Y_i \in \pointset^{m_i}} f(X_1, Y_1, \dots, X_\ell, Y_\ell) - \sum_{i=1}^\ell \sum_{x \in X_i} \sum_{y \in Y_i} d(x, y) \\
        & \qquad
        = \min_{(u_1, \dots, u_\ell) \in \pointset^\ell} f(u_1^{m_1}, \bar{u}_1^{m_1}, \dots u_\ell^{m_\ell}, \bar{u}_\ell^{m_\ell}) - \Delta\sum_{i=1}^\ell m_i^2
        ~.
    \end{align*}
\end{corollary}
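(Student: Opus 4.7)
The plan is to induct on $\ell$, peeling off one pair $(X_i, Y_i)$ at a time and invoking Lemma~\ref{lem:antipodes-in-potentials} at each step. The base case $\ell = 1$ is exactly Lemma~\ref{lem:antipodes-in-potentials}, so all the work is in the inductive step.

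For the inductive step, I would first commute the minimizations so that $(X_\ell, Y_\ell)$ is optimized innermost. For any fixed $X_1, Y_1, \dots, X_{\ell-1}, Y_{\ell-1}$, define $g(X_\ell, Y_\ell) := f(X_1, Y_1, \dots, X_\ell, Y_\ell)$; because $f$ is jointly $1$-Lipschitz, $g$ is $1$-Lipschitz on $\pointset^{2m_\ell}$. Applying Lemma~\ref{lem:antipodes-in-potentials} to $g$ gives
\begin{equation*}
\min_{X_\ell, Y_\ell} \Bigl( g(X_\ell, Y_\ell) - \sum_{x \in X_\ell} \sum_{y \in Y_\ell} d(x, y) \Bigr) = \min_{u_\ell \in \pointset} g(u_\ell^{m_\ell}, \bar{u}_\ell^{m_\ell}) - m_\ell^2 \Delta.
\end{equation*}
Substituting back, the left-hand side of the corollary equals
\begin{equation*}
\min_{u_\ell \in \pointset} \min_{X_i, Y_i : i < \ell} \Bigl( f(X_1, Y_1, \dots, u_\ell^{m_\ell}, \bar{u}_\ell^{m_\ell}) - \sum_{i=1}^{\ell-1} \sum_{x \in X_i} \sum_{y \in Y_i} d(x, y) \Bigr) - m_\ell^2 \Delta.
\end{equation*}

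For each fixed $u_\ell$, the function $\tilde f := f(\,\cdot\,, \dots, \,\cdot\,, u_\ell^{m_\ell}, \bar{u}_\ell^{m_\ell})$ is $1$-Lipschitz in its remaining $2(\ell-1)$ multi-set arguments, so the inductive hypothesis rewrites the inner minimum as $\min_{u_1, \dots, u_{\ell-1}} \tilde f(u_1^{m_1}, \bar{u}_1^{m_1}, \dots, u_{\ell-1}^{m_{\ell-1}}, \bar{u}_{\ell-1}^{m_{\ell-1}}) - \Delta \sum_{i=1}^{\ell-1} m_i^2$. Combining with the outer $\min_{u_\ell}$ and the remaining $-m_\ell^2 \Delta$ term collapses everything into $\min_{(u_1, \dots, u_\ell)} f(u_1^{m_1}, \bar{u}_1^{m_1}, \dots, u_\ell^{m_\ell}, \bar{u}_\ell^{m_\ell}) - \Delta \sum_{i=1}^\ell m_i^2$, which is exactly the right-hand side of the corollary.

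The argument is essentially bookkeeping and I do not expect any genuine obstacle. The only point worth verifying carefully is that fixing some arguments of a jointly $1$-Lipschitz function preserves $1$-Lipschitzness in the remaining multi-set arguments, which is immediate once one recalls that the joint Lipschitz condition is taken with respect to the sum of the Wasserstein distances on the individual coordinates. No new idea beyond Lemma~\ref{lem:antipodes-in-potentials} is required.
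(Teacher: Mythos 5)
Your proof is correct and matches the paper's (implicit) intent: the paper states the corollary without proof, and the evident intended argument is exactly your induction on $\ell$, applying Lemma~\ref{lem:antipodes-in-potentials} to the innermost pair while holding the other arguments fixed. The one point you flag — that fixing some arguments of a jointly $1$-Lipschitz $f$ preserves $1$-Lipschitzness in the rest — is indeed immediate from the way Lipschitzness is used in the lemma's proof (total distance moved across all coordinates), so there is no gap.
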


\paragraph*{Example \ref{exa:this-paper}.}

Recall the canonical form:
\begin{align*}
    \Phi(w_t, u, X)
    &
    = w_t(\bar{u}^3) + w_t(uX_1) + w_t(uX_2) + w_t(uX_3) \\
    & \qquad
    - d(x_{12}, x_{21}) - d(x_{22}, x_{31}) - d(x_{32}, x_{11}) \tag{Antipode constraints} \\
    & \qquad
    - d(x_{11}, x_{21}) - d(x_{21}, x_{31}) - d(x_{31}, x_{11}) \tag{Not on a semi-circle}
    ~.
\end{align*}
Since the work functions are Lipchitzness, and because each of $x_{12}$, $x_{22}$ and $x_{32}$ only appears in one configuration, it is equivalent to
\begin{align*}
    \Phi(w_t) & = \min_{u, x_{11}, x_{21}, x_{31} \in \pointset} w_t(\bar{u}^3) + w_t(ux_{11}\bar{x}_{21}) + w_t(ux_{21}\bar{x}_{31}) + w_t(ux_{31}\bar{x}_{11}) \\
    & \qquad - d(x_{11}, x_{21}) - d(x_{21}, x_{31}) - d(x_{31}, x_{11})
    \\
    & = \min_{u, x, y, z \in \pointset} w_t(\bar{u}^3) + w_t(ux\bar{y}) + w_t(uy\bar{z}) + w_t(uz\bar{x}) - d(x, y) - d(y, z) - d(z, x)
\end{align*}

\paragraph*{Examples~\ref{exa:chrobak-larmore} and \ref{exa:BeinCL}.}

These two cases are evident from the discussion in the main text and from Lemma~\ref{lem:antipodes-in-potentials}.


\paragraph*{Example~\ref{exa:Coester-Koutsoupias}.}

Recall the potential function
\[
    \Phi^\text{\rm CK} (w, x_1, x_2, \dots, x_k) = \sum_{i = 1}^{k+1} w \big( \bar{x}_{i-1}^{i-1} x_i x_{i+1}\dots x_k \big)
    ~.
\]
and the corresponding canonical potential
\[
    \Phi^\text{\rm CK}_\text{\rm Canonical} (w_t, u, X) = w_t(\bar{u}^k) + \sum_{i=1}^k w_t(uX_i) - \sum_{i=1}^k \sum_{j=i+1}^k \sum_{\ell=i}^{k-1} d(x_{i\ell}, x_{ji})
    ~.
\]

Applying Corollary~\ref{cor:antipodes-in-potentials}, let $X_i = x_{i,i}\dots x_{i,k-1}, Y_i = x_{i+1,i}\dots x_{k,i}$, there exists a minimizer such that $x_{i,\ell} = \bar{u}_i$ and $x_{j,i} = u_i$ for any $\ell \in [i, k-1], j \in [i+1, k]$, so we have
\begin{align*}
    \Phi^\text{\rm CK}_\text{\rm Canonical}(w_t) & = \min_{u, u_i \in \pointset}w(\bar{u}^k) + \sum_{i=1}^k w(u u_1 \dots u_{i-1} \bar{u}_i^{k-i}) \\
    & = \min_{x_k, x_{k-1}, \dots, x_1 \in \pointset}w(\bar{x}_k^k) + \sum_{i=1}^k w(x_k x_{k-1} \dots x_{k-i+1} \bar{x}_{k-i}^{k-i}) \\
    & = \min_{x_1, \dots, x_k \in \pointset} \sum_{i = 1}^{k+1} w \big( \bar{x}_{i-1}^{i-1} x_i x_{i+1}\dots x_k \big)
\end{align*}

\end{document}